\documentclass[runningheads]{llncs}
\usepackage[latin1]{inputenc}   
\usepackage[T1]{fontenc}  
\usepackage{amsmath}
\usepackage{prooftree}
\usepackage{hyperref}
\usepackage{proof}
\usepackage{qtree}
\usepackage{enumitem}

\usepackage{color}
\usepackage{graphicx}
\usepackage{amsfonts}
\usepackage{todonotes}

\newcommand {\hide}[1]{}

\newcommand{\irule}[3]
{\prooftree{#1}\justifies{#2}\using{\:#3}\endprooftree}
\DeclareSymbolFont{symbolsC}{U}{txsyc}{m}{n}
\DeclareMathSymbol{\strictif}{\mathrel}{symbolsC}{74}
\DeclareMathSymbol{\boxright}{\mathrel}{symbolsC}{128}

\newcommand{\Bet}{{\mathcal Bet}}

\newcommand{\hh}{\, | \, }
\newcommand{\seq}{\Rightarrow}
\newcommand{\lam}{\lambda}
\newcommand{\De}{\Delta}
\newcommand{\Ga}{\Gamma}
\newcommand{\Si}{\Sigma}
\newcommand*{\mg}{G}
\newcommand*{\mh}{H}
\newcommand*{\h}{\mid}

\newcommand {\ri} {\rightarrow}
\newcommand {\Ri} {\Rightarrow}

\newcommand {\bes} {\begin{description}}
	\newcommand{\ens} {\end{description}}

\newcommand {\beq} {\begin{quote}}
	\newcommand {\enq} {\end{quote}}
\newcommand {\bit} {\begin{itemize}}
	\newcommand {\enit} {\end{itemize}}
\renewcommand {\neg} {\lnot}
\newcommand {\caL} {{\cal L}}

\newcommand{\be}{\begin{enumerate}}
	\newcommand{\ee}{\end{enumerate}}

\newcommand{\hgt}[1]{\ensuremath{|#1|}}
\newcommand{\cp}[1]{\ensuremath{\ulcorner #1
\urcorner}}






\newcommand{\E}{{\bf E}}
\newcommand{\5}{S5}
\newcommand{\HE}{{\bf HE}}

\usepackage{amssymb,latexsym}
\usepackage{stmaryrd}
\usepackage{gn-logic14}

\title{Dyadic Obligations: Proofs and Countermodels via Hypersequents}
\author{Agata Ciabattoni\inst{1}\orcidID{0000-0001-6947-8772} \and
Nicola Olivetti\inst{2}\orcidID{0000-0001-6254-3754} \and
Xavier Parent\inst{1}\orcidID{0000-0002-6623-9853}}

\authorrunning{A. Ciabattoni et al.}
%
\institute{TU Wien, Vienna, Austria\\
\email{\{agata, xavier\}@logic.at}
\and
Aix-Marseille Univ, Universit\'{e} de Toulon, CNRS,
LIS, Marseille, France, \\
\email{nicola.olivetti@univ-amu.fr}}

\begin{document}

\maketitle
\begin{abstract}
The basic system $\E$ of dyadic deontic logic proposed by \AA qvist offers a simple solution to contrary-to-duty paradoxes and allows to represent norms with exceptions. We 
investigate $\E$ from a proof-theoretical viewpoint. We propose a hypersequent calculus with 
good 
properties, the most important of which is cut-elimination, and the consequent subformula property. The calculus is refined to obtain a decision procedure for $\E$ and an
effective countermodel computation in case of failure of proof search. 
By means of the refined calculus, we prove that validity in $\E$ is Co-NP and countermodels have  polynomial size.
\end{abstract}
\section{Introduction}

Deontic logic deals with obligation and other normative concepts, which are 
important in 
a variety of fields---from law and ethics to artificial intelligence.

%
Obligations are contextual in nature, and take the form of conditional statements ("if-then"). Their formal analysis rely on dyadic deontic systems. The family of those systems that come with a "preference-based" semantics is the best known one. It was originally developed by \cite{ddl:danielsson1968preference,ddl:H69},
and adapted to a modal logic setting by \AA qvist \cite{A84} and
Lewis~\cite{ddl:L73}. The framework has roots in the so-called
classical theory of rational choice, sharing the
assumption that 
a normative judgment is based on a maximization process of normative preferences.  In that
framework, $\bigcirc (B/A)$ (reading: "$B$ is obligatory, given $A$") is true when the best $A$-worlds are all
$B$-worlds.  The framework was early recognized as a landmark, due to
its ability to handle at once two different kinds of deontic
conditionals, whose treatment within a usual Kripke semantics had
proved elusive: (a) Contrary-to-duty (CTD) conditionals, and (b)
Defeasible deontic conditionals. 
The former are obligations that come into force when some other obligation
is violated. 
As is well-known (e.g.~\cite{Forr84}), deontic logicians have struggled with the the problem of giving a formal
treatment to CTD obligations. According to Hansson~\cite{ddl:H69},
	van Fraassen~\cite{ddl:vanFraassen1972}, Lewis~\cite{ddl:L73} and others,
	the problems raised by CTDs call for an ordering on
	possible worlds in terms of preference (or relative goodness, or betterness),
	and Kripke-style models fail in as much as they do not allow
	for grades of ideality.  
	The use of a
	preference relation has also been advocated for the
	analysis of defeasible conditional obligations.
	In particular, Alchourr\'{o}n~\cite{ddl:Al94} argues that
	preferential models provide a better treatment of this notion
	than the usual Kripke-style models.  Indeed, a defeasible
	conditional obligation leaves room for exceptions.
	Under a preference-based approach, we no longer have the
	deontic analogue of two laws, the failure of which constitutes
	the main formal feature expected from defeasible
	conditionals; these are "deontic" modus-ponens (or
	Factual Detachment): $\bigcirc(B/A)$ and $A$ imply
	$\bigcirc B$, and Strengthening of the
	Antecedent: $\bigcirc(B/A)$ entails $\bigcirc(B/A\wedge C)$.
(There is an extensive literature on the treatment of contrary-to-duties, e.g.~\cite{ddl:vanFraassen1972,ddl:L73,ddl:T81,ddl:LB83,ddl:PS97}, and defeasible conditional obligations, e.g.~\cite{ddl:M93,ddl:AB97,ddl:TT97,ddl:horty14},
in a preference-based framework.) 
%

The meta-theory of the framework has been the focus of much research in recent years (for an overview, see~\cite{ddl:P21}). Like in traditional modal logic, different properties
of the relation in the models yield different Hilbert systems. Early axiomatisation 
results~\cite{ddl:vanFraassen1972,ddl:S75,ddl:L73} 
were tailored to the case where the betterness relation comes with many properties.
These have been criticized as being too demanding in some contexts. Therefore subsequent research investigated how to extend these results to models equipped with a betterness relation meeting
less conditions, if any at
all~\cite{ddl:lou19,Parent15}.  
\AA qvist's system {\bf E}, corresponds to the most general case, involving no commitment to any structural property of the relation. Stronger systems--like
{\bf F} and {\bf G}--are obtained by adding extra constraints on the betterness relation. (A roadmap of existing systems is, e.g., in~\cite{ddl:lou19,ddl:P21}.)
In this paper we focus on {\bf E}, the weakest known
preference-based dyadic deontic logic.


So far for preference-based deontic logics there has been an almost exclusive focus on the  connection between semantic properties and Hilbert systems. Very little research has been done on Gentzen-style calculi. To our knowledge only {\bf G}, due to its equivalence with Lewis's VTA and van Fraassen's CD, has an analytic Gentzen calculus~\cite{GLOP2016}. As is well known such calculi have significant practical and theoretical advantages compared to Hilbert systems.  
In analytic calculi proof search proceeds indeed by
step-wise decomposition of the formulas to be proven.
For this reason 
they can be employed to establish important meta-logical properties
for the formalized logics (e.g., decidability, complexity and interpolation), and facilitate the
development of automated reasoning methods.  In general, analytic
calculi serve to find derivations and hence provide forms of
constructive {\em explanations} for normative systems; e.g.~showing which hypotheses have been used in deriving certain obligations given specific facts. 
They also facilitate counter-model construction from
non-derivable statements, and hence provide explanations of why "something should not be done".

The present paper aims at filling in this gap, 
focusing on \AA qvist's system \E. We introduce an analytic Gentzen-style calculus $\HE$ for $\E$, and use (a reformulation of) it to provide an alternative decidability proof for $\E$ and a complexity result. The calculus is also employed to generate formal explanations
for a well-known CTD paradox~\cite{Forr84} from the deontic logic literature. 

$\HE$ admits the elimination of the key rule of cut$-$which simulates Modus Ponens in Hilbert systems$-$and the consequent (relaxed version of the) subformula property; moreover its completeness proof
is independent from the logic's semantics. An "optimized" version $\HE+$ of $\HE$ is also given, that supports automated proof search and counterexample constructions.\footnote{
See~\cite{ddl:BFP19} for an alternative method for generating countermodels.} $\HE+$ is used to prove that the validity problem of $\E$ is co-NP and countermodels have polynomial size. 
%
%
%

We highlight two salient features of our approach. 
\vspace{-0.08cm}
\begin{itemize}
\item Since $\E$ is tightly connected  with the modal logic S5 ($\5$ is actually a sublogic of $\E$), our calculi are defined  using the hypersequent framework~\cite{Avr96}$-$a simple extension of Gentzen' sequent framework$-$needed to provide a cut-free calculus for $\5$~\cite{Min68,Avr96,Kurokawa13}, i.e. a calculus in which the cut rule is redundant.


\item Similarly to previous work on modal interpretation of conditionals, \emph{e.g.},~\cite{Nicola2,ben16},  we encode maximality by a unary modal operator.
Intuitively the fact that $x$ is among the best worlds that force a formula $A$ may be understood as saying that all the worlds accessible from $x$ via the betterness relation  
(or "above" $x$ according to the ranking) force not-$A$. This is 
 encoded as $\Bet \neg A$, where $\Bet$ is a K-type modal operator. The conditional obligation $\bigcirc(B/A)$ can be indirectly defined as $\Box (A\wedge \Bet\neg A  \rightarrow  B)$, where 
 $\Box$ obeys the laws of S5. 
Here "indirectly"  
indicates that the reduction schema is not explicitly introduced. $\Bet$ is not part of the language of $\E$,  but is used at the meta-level in the Gentzen-style system 
to define suitable rules for the conditional. 
\end{itemize}
We remark that although our calculus in some sense "translates" $\E$ into the bi-modal
logic $\5$+$K$, its complexity
turns out to be the same as for classical logic: co-NP; this contrasts
with the P-SPACE complexity of $\5$+$K$.

\vspace{-0.3cm}

	\section{System \E}
\label{preliminary}

\vspace{-0.2cm}

In this section we present the logic $\E$ both syntactically and semantically.

\begin{definition} The language $\mathcal{L}$ is defined by the following BNF:
$$A::= p\in \mbox{PropVar} \mid \neg A \mid A\ri A\mid \Box A\mid
\bigcirc (A/A)$$
$\Box A$ is read as ``$A$ is settled as true'', and
$\bigcirc (B/A)$ as ``$B$ is obligatory, given $A$''.
The Boolean connectives other than $\neg$ and $\ri$  are defined as usual.
\end{definition}

\begin{definition}
The axiomatization of $\E$ consists of  any Hilbert system for classical propositional logic,  the Modus Ponens rule $(MP)$:  If
$\vdash A \mbox{ and } \vdash A \rightarrow B\mbox{ then }\vdash B$, the rule $(Nec)$: $\mbox{If } \vdash A \mbox{ then } \vdash \Box A$ and the following axioms:  
\begin{flalign}
		& 
		\mbox{ \5 axioms for $\Box$   } 
		\tag{S5}\label{a5}
		\\&
		\bigcirc (B\IMPL C/A) \IMPL (\bigcirc (B/A) \IMPL \bigcirc
		(C/A))\tag{COK}\label{cok} \\ 
		& \bigcirc (A/A) \tag{Id}\label{id} \\ & \bigcirc (C/A\AND B) \IMPL
		\bigcirc (B\IMPL C/A)\tag{Sh}\label{sh} \\
		& \Box (A\IFF B) \IMPL (\bigcirc
		(C/A)\IFF \bigcirc (C/B) )\tag{Ext}\label{ext} \\ 
		& \bigcirc (B/A) \IMPL \Box\bigcirc
		(B/A) \tag{Abs}\label{abs} \\ & \Box A \IMPL \bigcirc (A/B)
		\tag{O-Nec}\label{nec} 
	\end{flalign}
The notions of derivation and theoremhood are as usual. 
\end{definition}
An intuitive reading of the axioms is as follows. A basic design choice of the logic $\E$ is that necessity 
is interpreted as in the modal logic S5. 
(\ref{cok}) is the conditional analogue of the familiar distribution
axiom K. (\ref{abs}) is the absoluteness axiom of~\cite{ddl:L73}, and
reflects the fact that the ranking is not world-relative.  (\ref{nec})
is the deontic counterpart of the  necessitation
rule. (\ref{ext}) permits the replacement of necessarily equivalent
sentences in the antecedent of deontic conditionals.  (\ref{id}) is
the deontic analogue of the identity principle. Named after Shoham~\cite[p.\,77]{ddl:S88} who seems to have been the
first to discuss it, (\ref{sh}) can be seen as expressing a "half" of deduction theorem or a "half" residuation property. The question of whether (\ref{id}) is a
reasonable law for deontic conditionals has been much debated (see~\cite{ddl:PS97} for a defense). 

The semantics of $\E$ can be defined in terms of \emph{preference models}. They  are possible-world models equipped with a comparative goodness relation $\succ$ on worlds so that  $x\succ y$ can be
	read as "world $x$ is \emph{better} than world $y$".  Conditional obligation is defined by considering  "best" worlds:  intuitively, $\bigcirc (B/A)$ holds in a model, if all the best worlds in which $A$ is true also make $B$ true.

\begin{definition}\label{def:0}  A preference model is a structure
$M=(W, \succ, V)$ ($W\not =\emptyset$)  whose members are called possible worlds, $\succ\subseteq W\times W$,  $V: W\rightarrow \mathcal{P}(PropVar)$. 
The following evaluation rules are used, for all $x\in W$: 
\begin{itemize}
	\item $M,x\vDash p$  iff  $p\in V(x) $ 
	\item $M,x\vDash  \neg A$  iff $ M,x\not\vDash A$ 
	\item $M,x\vDash  A\rightarrow B$  iff  if $ M,x\vDash A$  then $M, x\vDash B$
	\item $M,x\vDash \Box A \mbox{ iff }  \forall y \in W \  M,y\vDash A$ 
	\item $M,x\vDash \bigcirc (B/A) \mbox{ iff } \forall y\in \mathrm{best}(A) \ 	M, y\vDash B$
\end{itemize}
where $\mathrm{best}(A) = \{y\in W \mid M, y\models A \ \mbox{and there is no} \ z \succ y  \ \mbox{such that} \  M, z\models A\}$. 
	A formula $A$ is valid in a model $M$ 
	if for all worlds $x$ in $M$, $M,x\models A$. A formula $A$ is \emph{valid}  iff it is valid in every preference model. 
\end{definition}
Observe that we do not assume any  specific property of  $\succ$.

To the purpose of the calculi developed in the following, 
we introduce the modality $\mathit{\Bet}$, which will  allow us to represent the "Best" worlds: $M,x\vDash \mathit{\Bet} A$  iff  $\forall y\succ x \; M, y\vDash A$.
By this definition, we get $x\in \mathrm{best}(A)$ iff $M, x\models A$ and $M, x\models \Bet \neg A$.
However, the modality $\Bet$ is not part of  $\mathcal{L}$. %
As a  notational convention,  when no confusion  arise, we write $x\vDash A$ for  $M, x\vDash A$. 
The following result from~\cite{Parent15} is needed for subsequent developments:
\begin{theorem}\label{e:com}{\bf E} is sound and complete w.r.t.  the class of all preference models.
\end{theorem}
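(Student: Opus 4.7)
The plan is to prove the two directions separately, with completeness being substantially harder. For soundness I would proceed by induction on the length of derivations in $\E$, verifying that each axiom is valid in every preference model and that (MP) and (Nec) preserve validity. The S5 axioms and (Nec) are handled by the standard argument, since $\Box$ is interpreted as a universal quantifier over $W$. Axioms (Ext) and (Abs) follow because the truth value of $\bigcirc(B/A)$ depends only on the extensions of $A$ and $B$ in $M$, which are world-independent. (Id) reduces to $\mathrm{best}(A) \subseteq \{y : y \vDash A\}$, (O-Nec) to $\mathrm{best}(B) \subseteq W$, and (COK) and (Sh) unfold directly from the truth clause for $\bigcirc$ together with propositional reasoning.

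For completeness, I would build a Henkin-style canonical model $M^c = (W^c, \succ^c, V^c)$. Take $W^c$ to be the set of all maximal $\E$-consistent sets of $\mathcal{L}$-formulas, with $V^c(x) = \{p : p \in x\}$. Because of (Abs) combined with S5 for $\Box$, one may restrict attention to a single $\Box$-equivalence class when refuting a non-theorem, which lets us interpret $\Box$ as the universal modality on $W^c$ and yields the truth lemma for $\Box$-formulas by the standard S5 argument.

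The delicate step is defining $\succ^c$ so that the truth lemma extends to $\bigcirc$. Writing $\widehat{A} = \{z \in W^c : A \in z\}$, the target property is: $y \in \mathrm{best}(\widehat{A})$ iff $A \in y$ and for every $C$, $\bigcirc(C/A) \in y$ implies $C \in y$. To achieve this, $\succ^c$ should be defined in terms of formulas stored in the MCSs, so that $z \succ^c y$ encodes a certificate that $y$ fails to be optimal among worlds satisfying some $A \in y \cap z$ that $z$ witnesses as better. The harder direction of the truth lemma requires, given $\bigcirc(B/A) \notin x$, the construction via Lindenbaum's lemma of an MCS $y$ containing $\{A, \neg B\} \cup \{C : \bigcirc(C/A) \in x\}$, followed by verification that no $z \succ^c y$ satisfies $A$; axioms (Sh), (Id) and (Ext) are essential both for the consistency of this set and for the canonical best-set characterization.

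The main obstacle is precisely this truth lemma for $\bigcirc$: since $\E$ assumes no structural properties on $\succ$, there is considerable freedom in how to define $\succ^c$, but this freedom must be exercised carefully so that both directions of the biconditional go through simultaneously. The full construction is carried out in~\cite{Parent15}.
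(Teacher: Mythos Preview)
Your sketch of the canonical-model construction is a faithful outline of what is done in~\cite{Parent15}, and you even cite that reference. However, the paper does \emph{not} reprove the theorem from scratch along these lines. Its argument is much shorter: it simply invokes the soundness and completeness result already established in~\cite{Parent15}, and then addresses a discrepancy between the semantics used there and the one in the present paper. Namely, \cite{Parent15} evaluates $\bigcirc(B/A)$ via a different notion of maximality, $\mathrm{best}'(A)=\{y:y\models A \text{ and } y\succ z \text{ for all } z \text{ with } z\models A \text{ and } z\succ y\}$, rather than the $\mathrm{best}(A)$ of Definition~\ref{def:0}. The paper's only proof step is a model-transformation: given $M=(W,\succ,V)$ with $\mathrm{best}'$, define $M'=(W,\succ',V)$ by $x\succ' y$ iff $x\succ y$ and $y\not\succ x$; then $\mathrm{best}'_{M}(A)=\mathrm{best}_{M'}(A)$, so the two models satisfy the same formulas and completeness transfers.

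Your proposal does not mention this point, which is the only nontrivial content of the paper's proof. If you intend to rely on~\cite{Parent15}, you must either note that it uses $\mathrm{best}'$ and supply the bridging argument, or else argue that your sketched canonical construction works directly for the $\mathrm{best}$ semantics (which it may, but then you are not simply citing~\cite{Parent15}).
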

The completeness proof in~\cite{Parent15} uses another notion of maximality, call it $\mathrm{best'}$, where $y\in  \mathrm{best'}(A)$ iff $y\models A  \mbox{ and } \ y \succ z  \mbox{ for all } z\mbox{ s. t.} \  z\models A  \mbox{ and } \ z \succ y.$
Although  $\mathrm{best}$ and $ \mathrm{best'}$ are not equivalent, our result follows almost at once. Indeed, starting with a model $M=(W,\succ, V)$ in which obligations are evaluated using  $\mathrm{best'}$, one can derive an equivalent model $M'=(W,\succ', V)$ (with $W$ and $V$ the same) in which obligations are evaluated using $\mathrm{best}$.\footnote{Put $x\succ' y$ iff $x\succ y$ and $y\not\succ x$. We can easily verify that an arbitrarily chosen world satisfies exactly the same formulas in both models, viz. for all worlds $x$, $M,x\models A$ iff $M',x\models A$. (The sole purpose of this construction is to extend the result in~\cite{Parent15} to the current setting.)}

We end this section with two remarks.
The first one concerns reductions of conditional logics to modal logics.
In the literature various such reductions have been introduced; perhaps the best-known is the embedding of conditional logic into S4 put forth by Lamarre and Boutilier (see the discussion in \cite{ddl:M93} and the references therein). There are similarities with their approach, but also important differences. They define indeed an embedding of a conditional logic, different from $\E$, into S4.  In contrast, we do {\em not} embed $\E$ into any (bi)modal logic. $\E$  contains an $\5$ modality as a primitive notion, whose meaning is independent from the dyadic modality $\bigcirc (B/A)$.
%

The second remark concerns the suitability  of $\E$ to handle exceptions. 
Readers familiar with~\cite{ddl:PS97,ddl:horty14} may question this suitability.
We think that $\E$ does provide a minimal account of exceptions. However, we agree with~\cite{ddl:TT97} that a more adequate treatment of 
exceptions within a preference-based framework calls for the combined use of a normality relation and a betterness relation. 



	\section{A cut-free hypersequent calculus for $\E$} \label{an-e}

	We introduce the hypersequent calculus $\HE$ for the logic $\E$. $\HE$
	is defined in a modular way by adding to the calculus for the modal logic $\5$ suitable rules for the dyadic obligation, and the $\Bet$ operator.
	Introduced in~\cite{Min68} to define a cut-free calculus for $\5$, hypersequents consist of sequents working in parallel.
	
	
		\begin{definition}
			A \emph{hypersequent} is a multiset 
$\Gamma_1 \seq \Pi_1 \hh \dots \hh \Gamma_n \seq \Pi_n$
			where, for all $i = 1, \dots ,n,$ $\Gamma_i \seq \Pi_i$ is an
			ordinary sequent, called \emph{component}.
		\end{definition}
	
	The hypersequent calculus $\HE$ is presented in Def.~\ref{HE}.
	It consists of initial hypersequents (i.e., axioms), logical/modal/deontic and structural rules. The latter are divided into \emph{internal} and \emph{external rules}. 
		$\HE$ incorporates the sequent calculus for the modal logic S4 as a sub-calculus and adds an additional layer of information by considering a single sequent to live in the context of hypersequents. Hence all the axioms and rules of $\HE$ (but the external structural rules) are obtained by adding to each sequent a context $G$ (or $H$), representing a possibly empty hypersequent. For instance, the (hypersequent version of the) axioms are
	$\Gamma, p \Ri   \Delta, p \hh G $.
	%
		The external structural rules include
		ext. weakening (ew) and ext. contraction (ec) (see Fig.~\ref{h-rules}). These behave like weakening and contraction over whole hypersequent components. The hypersequent structure opens the possibility to
define new such rules that allow the "exchange of information" between different sequents.
It is this type of rules which increases the expressive power of hypersequent calculi compared to sequent calculi, allowing the definition of cut-free calculi for logics that seem to escape a cut-free sequent formulation (e.g., $\5$).
		An example of external structural rule is the $(s5)$ rule in~\cite{Kurokawa13} (reformulated as $(s5')$ in Fig.~\ref{h-rules} to account for the presence of $\bigcirc$), that allows the peculiar axiom of $\5$ to be derived as follows:
		
		\begin{figure}[!t]\small
	$$
		\begin{prooftree}
		\[\[\[\[
		\Box A \seq \Box \neg \Box A, \Box A
		\justifies
		 \Box A \seq \hh \seq \Box \neg \Box A, \Box A
		\using {\scriptscriptstyle \rm (s5')}
		\]
		\justifies
		\seq \neg \Box A \hh \seq \Box \neg \Box A, \Box A
		\using {\scriptscriptstyle \rm (\neg R)}
		\]
		\justifies
		\seq \Box \neg \Box A, \Box A \hh \seq \Box \neg \Box A, \Box A
		\using {\scriptscriptstyle \rm (\Box R)}
		\]
		\justifies
		\seq \Box \neg \Box A, \Box A
		\using {\scriptscriptstyle \rm (ec)}
		\]
		\justifies
		\seq \neg \Box A \to \Box \neg \Box A
		\using {\scriptscriptstyle \rm (\to  R) + (\neg L)}
		\end{prooftree}
		$$
\end{figure}

	\begin{figure}[!t]\small
			$$
				\infer[(ew)]{G \hh \Gamma \seq \Pi}{G}  \quad
				\infer[(ec)]{G \hh \Gamma \seq \Pi}{G \hh \Gamma \seq \Pi \hh \Gamma
					\seq \Pi}
			\quad 
				\infer[(s5')]{G \hh \Gamma  \seq  \hh \Gamma' \seq \Pi'}
				{G \hh \Gamma^{\square}, \Gamma^O, \Gamma' \seq \Pi'}
		$$
			\caption{External structural rules}
			\label{h-rules}
		\end{figure}		
			
	The rules in Fig.~\ref{h-rules} and~\ref{sh-rules} make use of the following notation:
	
	$
	\begin{array}{ll}
                \Sigma^{b\downarrow} = \{G : \Bet \; G\in \Sigma\} &
      \Sigma^{\Box} = \{\Box G  : \ \Box G \in \Sigma\} \\
                \Sigma^O = \{\bigcirc (C/D) : \bigcirc (C/D)\in \Sigma\} 
        \end{array}
   $
\begin{definition}
\label{HE}
The hypersequent calculus $\HE$ consists of the hypersequent version of Gentzen LK sequent calculus for propositional classical logic, the external structural rules in Fig.~\ref{h-rules} and the modal and deontic rules in Fig.~\ref{sh-rules}.
\end{definition}

	\begin{figure}[!t]\small
$$ \begin{array}{c c}
\irule{\Gamma^{\Box}, \Gamma^O, A, \Bet \ \neg A \Ri B\hh G}{\Gamma  \Ri \bigcirc (B / A), \Delta\hh G}{(\bigcirc R)}
&
\; \irule{\Gamma^{\Box},  \Gamma^O, \Gamma^{b\downarrow} \Ri  A \hh G }
{\Gamma \Ri   \Delta, \Bet \ A \hh G}{(\Bet)}\\\\

\irule{\Gamma^{\Box}, \Gamma^O  \Ri  A\hh G}
{\Gamma \Ri   \Delta, \Box A\hh G}{(\Box R)}
&
\irule{\Gamma, \Box A, A  \Ri \Delta\hh G}
{\Gamma, \Box A \Ri   \Delta\hh G}{(\Box L)}
\end{array}
$$
$$\irule{\Gamma, \bigcirc (B / A) \Ri   \Delta, A \hh G    \quad 
	\Gamma, \bigcirc (B / A) \Ri   \Delta, \Bet \  \neg A\hh G \quad 
	\Gamma, \bigcirc (B / A), B \Ri   \Delta\hh G}
{\Gamma, \bigcirc (B / A) \Ri  \Delta\hh G}{(\bigcirc L)} $$
\caption{Deontic and modal rules}
			\label{sh-rules}
\end{figure}

\noindent
A \emph{derivation} in $\HE$ is a tree obtained by applying the rules bottom up. A \emph{proof} $\cal D$ is a derivation whose  leafs are axioms. This distinction will be used in Sect.~\ref{alternative}.

The soundness of $\HE$ is proved with respect to preference models. Although we can interpret directly an hypersequent $H$ into the semantics, it is easier (and more readable) to interpret it as a formula  $I(H)$ of the extended language $\caL+\Bet$ and show the validity of this formula whenever $H$ is provable. 
		\begin{theorem}
		\label{Th:soundnessHE}
			If there is a proof in $\HE$ of $H: = \Gamma_1 \seq \Pi_1 \hh \dots \hh \Gamma_n \seq \Pi_n$, then $I(H):= \Box (\bigwedge \Gamma_1   \rightarrow \bigvee \Pi_1) \lor \ldots \lor \Box (\bigwedge \Gamma_n   \rightarrow \bigvee \Pi_n)$ is valid. 
		\end{theorem}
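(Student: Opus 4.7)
The plan is to prove the theorem by induction on the height of the proof $\mathcal{D}$ of $H$. The base case handles axioms $\Gamma, p \Ri \Delta, p \hh G$: their interpretation contains the disjunct $\Box(\bigwedge\Gamma \wedge p \to p \vee \bigvee \Delta)$, whose inner implication is a propositional tautology, so the boxed formula is valid. For the inductive step, assume the interpretations of the premises are valid and show the same for the conclusion. The propositional and internal structural rules are routine since their reasoning occurs under the outer $\Box$, and $(ew)$, $(ec)$ are immediate from the disjunctive shape of $I(\cdot)$.

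The substantive cases are the modal, deontic and $(s5')$ rules. Two semantic facts are central and should be isolated as a lemma: (i) \emph{absoluteness of $\Box$ and $\bigcirc$}, meaning that for any $\Box A$ and any $\bigcirc(C/D)$ their truth value does not depend on the world at which they are evaluated (the former by S5, the latter because $\mathrm{best}(D)$ is defined globally in Def.~\ref{def:0}); and (ii) \emph{$\Bet$-transfer}: if $y\models \Bet B$ and $z\succ y$, then $z\models B$. With (i)--(ii) in hand, the cases proceed as follows. For $(\Box R)$ and $(\Box L)$: standard S5 reasoning, using (i) to push $\Gamma^\Box,\Gamma^O$ past $\Box$. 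For $(\Bet)$: given a putative countermodel with $y\models\bigwedge\Gamma$ and $y\not\models \Bet A$, pick $z\succ y$ with $z\not\models A$; then $z$ satisfies $\Gamma^\Box$ and $\Gamma^O$ by (i) and $\Gamma^{b\downarrow}$ by (ii), contradicting the premise.

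For $(\bigcirc R)$: a countermodel yields $y\not\models\bigcirc(B/A)$, hence some $z\in\mathrm{best}(A)$ with $z\not\models B$; by Def.~\ref{def:0} and the remark after it, $z\models A\wedge\Bet\neg A$, and (i) lifts $\Gamma^\Box,\Gamma^O$ to $z$, contradicting the premise. For $(\bigcirc L)$: a counterexample $y$ to the conclusion, with $y\models\bigcirc(B/A)$ and $y\not\models\bigvee\Delta$, forces from the first two premises that $y\models A$ and $y\models\Bet\neg A$; hence $y\in\mathrm{best}(A)$ and $y\models B$, and then the third premise yields $y\models\bigvee\Delta$, a contradiction. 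For $(s5')$: if the conclusion fails at $x$, choose $z$ witnessing the failure of $\Box(\bigwedge\Gamma'\to\bigvee\Pi')$; by (i), $z$ still satisfies $\Gamma^\Box$ and $\Gamma^O$ (which were true at the other witness), giving a falsifier for the premise.

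The main obstacle is $(s5')$ together with $(\Bet)$: both require transporting $\Gamma^\Box$ and $\Gamma^O$ across components or across the $\succ$-edge without a direct syntactic link, and this hinges precisely on the absoluteness property (i). I would therefore prove (i) cleanly at the outset; once available, each rule becomes a short semantic argument, and the induction closes uniformly.
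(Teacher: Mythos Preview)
Your proposal is correct and follows essentially the same approach as the paper: an induction on the proof, with the modal/deontic/\((s5')\) cases handled by contraposition using exactly the two semantic facts you isolate (absoluteness of \(\Box\) and \(\bigcirc\), and the \(\Bet\)-transfer along \(\succ\)). The paper only spells out \((\bigcirc R)\), \((\Bet)\), and \((s5')\) with these facts used inline rather than stated as a preliminary lemma, so your presentation is slightly more explicit but not different in substance.
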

		\begin{proof}
By induction on the proof of $H$. We show  {\em $(\bigcirc R)$}, $(\Bet)$ and  {\em (s5')}.

\smallskip
\noindent
{\em $(\bigcirc R)$} Suppose  the premise is valid but not the conclusion. Thus for some model $M$ and  world  $x$,  $x\not\models \Box (\bigwedge\Gamma \rightarrow \bigvee \Delta \lor  \bigcirc (B / A) ) \lor \Box G$. Thus
$(1) \ x\not\models \Box (\bigwedge\Gamma \rightarrow \bigvee \Delta \lor  \bigcirc (B / A) )$ and $x\not\models \Box G$. 
	Since the premise is valid: $x\models \Box (\bigwedge\Gamma^{\Box} \land \bigwedge \Gamma^O \land A \land \Bet \ \neg A \ri B) \lor \Box G$
	so that	$(2) \ x\models \Box (\bigwedge\Gamma^{\Box} \land \bigwedge \Gamma^O \land A \land \Bet \ \neg A \ri B)$. From (1) there is $y$ s.t. $(3) \ y\models \bigwedge\Gamma$, $y\not\models \bigvee \Delta$ and  $y\not\models   \bigcirc (B / A)$;  for the latter there is some $z$ such that $ z\in best(A)$ and $z\not\models B$ [evaluation rule for $\bigcirc$]. So    $z\models A$  and  $z\models \Bet\neg A$ [def of $\Bet$]. 
	From (3), $y\models \bigwedge\Gamma^{\Box} \land \bigwedge \Gamma^O$, whence also for $z$, as  $\Gamma^{\Box}$ and  $\Gamma^O$ express  global assumptions, holding in all worlds in the model.
	Thus  $z\models \bigwedge\Gamma^{\Box} \land \bigwedge \Gamma^O \land A \land \Bet\neg A$. By (2)  $z\models B$, a contradiction.
    
\smallskip
\noindent
{\em $(\Bet)$}	
	Suppose that the premise is valid but not the conclusion. Thus for a model $M$ and  world $x$ 
	(ignoring the context $G$) 
	$x\not\models \Box (\bigwedge\Gamma \rightarrow \bigvee \Delta \lor \Bet A ) $, 
	but $(*) \ x\models \Box (\bigwedge\Gamma^{\Box} \land  \bigwedge\Gamma^O \land  \bigwedge\Gamma^{b\downarrow} \ri  A) $
	thus for some world $y$: (1) $y\models \bigwedge\Gamma$  (2) $y\not\models \Bet A$. Observe that 
	$(3) \ y \models \bigwedge\Gamma^{\Box} \land  \bigwedge\Gamma^O$
	and that (4) $y\models \Bet \ C$ for all $\Bet \ C\in\Gamma$.
	By (2) there is $z$ with $z\succ y$ s.t. $z\not \models A$.
	Hence $z\models \bigwedge\Gamma^{\Box} \land  \bigwedge\Gamma^O$. But by (4) we also get 
	$z\models \Gamma^{b\downarrow}$. Therefore  by (*) we get $z\models A$, a contradiction. 

\smallskip
\noindent
{\em (s5')}	
Suppose that the premise is valid but not the conclusion. Thus for some $M$ and   $x$,	$x\not\models \Box \neg \bigwedge\Gamma \lor  \Box (\bigwedge\Gamma' \rightarrow \bigvee \Pi')$, so  that $x\not\models \Box \neg \bigwedge\Gamma$ and  
		$x\not\models\Box (\bigwedge\Gamma' \rightarrow \bigvee \Pi')$. Therefore there are $y,z\in W$, such that $y\not\models \neg \bigwedge\Gamma$, meaning (1) $y\models  \bigwedge\Gamma$ and $z\not\models \bigwedge\Gamma' \rightarrow \bigvee \Pi'$, which entails (2) $z\models \bigwedge\Gamma'$ and (3)  $z\not\models \bigvee \Pi'$. By validity of the premise, $z\models \bigwedge\Gamma^{\Box} \land \bigwedge \Gamma^O \land \bigwedge \Gamma' \ri \bigvee \Pi'$, so that by (3), 
		(4) $z\not\models \bigwedge\Gamma^{\Box} \land \bigwedge \Gamma^O \land \bigwedge \Gamma'$. But by (1) , $z\models \bigwedge\Gamma^{\Box} \land \bigwedge \Gamma^O$ so that by (2) and (4) we have a contradiction. 
		\end{proof}
		\vspace{-0.2cm}
		\begin{theorem}[Completeness with cut]
		\label{th:completeness}
		Each  theorem of $\E$ has a proof in $\HE$ with the addition of 
		the cut rule: 
$$\infer[(cut)]{G \hh H \hh \Gamma, \Sigma \seq   \Delta, \Pi}
				{G \hh \Gamma, A  \seq \Delta \hspace{0.5cm} H \hh \Sigma \seq \Pi, A}$$
		\end{theorem}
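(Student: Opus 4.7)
The plan is the standard one for Hilbert-to-Gentzen completeness: we induct on the length of an $\E$-derivation and show that (a) every axiom of $\E$ admits a proof in $\HE$ (in fact without cut), and (b) each inference rule of the Hilbert calculus is admissible in $\HE$ extended with $(cut)$.

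Part (b) is immediate. Modus ponens is directly simulated by $(cut)$: given proofs of $\Ri A$ and $\Ri A \ri B$, we invert $(\ri R)$ on the second to obtain $A \Ri B$, and then $(cut)$ on $A$ yields $\Ri B$. Necessitation is captured by a single application of $(\Box R)$, whose side condition is trivially met when the premise is $\Ri A$, so $\Ri \Box A$ follows at once.

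For part (a), the $\5$-axioms follow from the standard cut-free hypersequent derivations, of which the paper already displays the characteristic example. Most deontic axioms reduce to initial hypersequents after a short sequence of bottom-up rule applications. Specifically, (Id) collapses via $(\bigcirc R)$ to the axiom $A, \Bet \neg A \Ri A$; (Abs) collapses via $(\Box R)$ to $\bigcirc(B/A) \Ri \bigcirc(B/A)$, since the principal formula lies in $\Gamma^O$ and is preserved by the rule; (O-Nec) collapses via $(\bigcirc R)$ followed by $(\Box L)$ to an axiom; and (COK) is handled by two successive applications of $(\bigcirc L)$ to $\bigcirc(B \ri C/A)$ and $\bigcirc(B/A)$, the antecedent and maximality branches of the two calls coinciding and collapsing to copies of the same initial hypersequent, while the consequent branch derives $C$ from $B \ri C$ and $B$.

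The genuinely delicate cases are (Sh) and especially (Ext), both of which require coordinating $(\bigcirc L)$ with the rule $(\Bet)$. In (Sh), after decomposing the goal via $(\bigcirc R)$ and $(\ri R)$ and applying $(\bigcirc L)$ to $\bigcirc(C/A \wedge B)$, the crucial branch asks us to derive $\Bet \neg(A \wedge B)$ from $\Bet \neg A$; this is done by $(\Bet)$, whose premise $\neg A \Ri \neg(A \wedge B)$ is purely propositional. In (Ext), the same mechanism is iterated in both directions: we must show that $\Box(A \sse B)$ lets us replace $A$ by $B$ both inside the antecedent branch of $(\bigcirc L)$ and, via a combination of $(\Bet)$ and $(\Box L)$, inside the maximality formula. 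This is the step I expect to be the main obstacle, since it relies on the fact that the $(\Bet)$ rule transmits $\Box$-formulas to its premise, so that $\Box(A \sse B)$ can be unpacked there and used to derive the required equivalence pointwise. Once all axioms are proved, the induction closes and the theorem follows.
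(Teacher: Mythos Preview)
Your proposal is correct and follows essentially the same approach as the paper: show that the Hilbert rules are simulated in $\HE+(cut)$ and that every axiom of $\E$ is derivable, the paper displaying only (COK) explicitly while you sketch all of them. The sole cosmetic difference is that you handle Modus Ponens by inverting $(\ri R)$ and then one cut, whereas the paper uses the derivable sequent $A, A\ri B \Ri B$ together with two cuts; both are fine.
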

		\vspace{-0.2cm}
		\begin{proof}
		As Modus Ponens corresponds
to the provability of $A, A \to B \seq B$ and two applications of cut, it suffices to show that $(Nec)$ and
all the axioms of $\E$ are provable in $\HE$. As an example, we show a
proof of $(\ref{cok})$:
 $$
 \small
\begin{prooftree}
 \[\[
 \[
 B \to C, B \seq C
\hspace{0.2cm}  \[ 
 A \seq A
\justifies
\Bet \; \neg A \seq \Bet \; \neg A
\using {\scriptscriptstyle \rm (\Bet)^\star}
\]
A \seq A
 \justifies
 B\IMPL C, \bigcirc (B/A), A, \Bet \; \neg A \seq C
 \using {\scriptscriptstyle \rm (\bigcirc  L)\ast}
\] 
\[ 
 A \seq A
\justifies
\Bet \neg A \seq \Bet \; \neg A
\using {\scriptscriptstyle \rm (\Bet)^\star}
\]
A \seq A
\justifies
\bigcirc (B\IMPL C/A), \bigcirc (B/A), A, \Bet \; \neg A \seq C
\using {\scriptscriptstyle \rm (\bigcirc  L)\ast}
                \]
                \justifies
\bigcirc (B\IMPL C/A), \bigcirc (B/A) \seq  \bigcirc (C/A)
\using {\scriptscriptstyle \rm (\bigcirc  R)}
                \]
                \justifies
\seq \bigcirc (B\IMPL C/A) \IMPL (\bigcirc (B/A) \IMPL \bigcirc (C/A))
\using {\scriptscriptstyle \rm (\to R) x 2}
                \end{prooftree}
$$

($\ast$ in the above proof stands for additional applications of internal weakening, and $(\Bet)^\star$ stands for $(\Bet) + (\neg L) + (\neg R)$)
		\end{proof}
		

		\subsection*{Cut-elimination}
		Theorem~\ref{th:completeness} heavily relies on the presence of
		the cut rule. In this section we give a constructive proof
		that cut can in fact be \emph{eliminated} from $\HE$ proofs.
		This result (cut elimination) implies (a relaxed form of) the {\it subformula property}:
		all formulas occurring in a cut-free $\HE$ proof are subformulas (possibly negated and under the scope of $\Bet$) of the formulas to be proved.

\vspace{-0.2cm}
\paragraph{Proof idea:}
To reduce the complexity of a cut on a formula of the form $\neg A$ or $A \to B$ we can exploit the rule invertibilities
(Lemma~\ref{invertibility}).
Some care is needed to deal with cut-formulas of the form $\Box A$, $\Bet \ A$ and $\bigcirc (B / A)$. There we cannot use the invertibility argument and cuts have to be shifted upward till the cut-formula is introduced. Notice however that the $(\Box R)$, 
$(\bigcirc R)$ and $(\Bet)$ rules do not allow to shift {\em every} cut upwards: only those involving sequents of a certain "good" shape. The proof hence proceeds by shifting uppermost cuts upwards in a specific order: first over the premise in which the cut formula appears on the right  (Lemma~\ref{reduction}) and then, when a rule introducing the cut formula is reached (and in this case the sequent has a "good" shape), shifting the cut upwards over the other premise (Lemma~\ref{leftinv}) till the left cut formula is introduced and the cut can be replaced by smaller cuts.
The hypersequent structure does not require major changes; as the $(s5')$ rule allows cuts with "good" shaped sequents to be shifted upwards, to handle $(ec)$ we consider the hypersequent version of the multicut: cutting one component (i.e. sequent) against possibly many components.

The {\em length} $\hgt{{\mathcal D}}$ of an $\HE$ proof ${\mathcal D}$
is (the maximal number of applications of inference rules) $+ 1$
occurring on any branch of~$d$.
The {\em complexity} $\cp{A}$ of a formula $A$ is  defined as: 
$\cp{A} = 0$ if $A$ is atomic, $\cp{\neg A} = \cp{A} +1$, 
$\cp{A \to B} = \cp{A}+ \cp{B} +1$, $\cp{\Bet \; A}= \cp{A} +1$,
$\cp{\Box A} = \cp{A} +1$,
and
$\cp{\bigcirc (A / B)} = \cp{A} + \cp{B} + 3$.
The {\em cut rank} $\rho({\mathcal D})$ of 
${\mathcal D}$ is
the maximal complexity + 1 of cut formulas in ${\mathcal D}$,  noting that
$\rho({\mathcal D}) = 0$ if ${\mathcal D}$ is cut-free. We use $A^n$ to indicate $n$ occurrences of $A$.

It is easy to see that the rules of the classical propositional connectives remain invertible, as stated in the lemma below.
\begin{lemma}
\label{invertibility}
Given an $\HE$ proof ${\cal D}$ of a hypersequent containing a compound formula $\neg A$ (resp. $A \to B$), we can find a proof ${\cal D}'$ of the same hypersequent ending in an introduction rule for $\neg A$ (resp. $A \to B$) and with
$\rho({\mathcal D}') \leq \rho({\mathcal D})$.
\end{lemma}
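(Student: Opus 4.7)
The plan is to prove the invertibility of the propositional introduction rules $(\neg L), (\neg R), (\to L), (\to R)$ uniformly by induction on the height $\hgt{\mathcal{D}}$. I describe the argument for $(\neg L)$; the other three are symmetric. Fix an occurrence of $\neg A$ on the left of some component $\Gamma, \neg A \Ri \Delta$ of the conclusion. The goal is to construct a proof of the would-be premise $\Gamma \Ri \Delta, A$ (keeping the rest of the hypersequent context unchanged) with cut rank at most $\rho(\mathcal{D})$; then ${\cal D}'$ is obtained by one final application of $(\neg L)$. In the base case, $\mathcal{D}$ is an initial hypersequent $\Sigma, p \Ri \Pi, p \hh G$; since $\neg A$ is compound it only occurs as a side formula, and deleting it on the left while inserting $A$ on the right leaves the hypersequent still of axiomatic form.

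In the inductive step, let $(r)$ be the last rule of $\mathcal{D}$. If $(r) = (\neg L)$ with the distinguished occurrence of $\neg A$ principal, its premise is already $\Gamma \Ri \Delta, A \hh G$ and we are done. For any other rule whose premises retain $\neg A$ as a side formula, I apply the induction hypothesis to each premise and reapply $(r)$. The delicate cases are the rules whose premises drop part of the context of the affected component, namely $(\Box R), (\bigcirc R), (\Bet),$ and $(s5')$. Because $\neg A$ is not of the form $\Box B$, $\bigcirc (B/C)$, or $\Bet B$, it belongs to none of $\Gamma^{\Box}, \Gamma^O, \Gamma^{b\downarrow}$; hence the premise of the modal rule is literally the same whether or not $\neg A$ is present on the left. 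I therefore reapply the modal rule to the original premise to obtain the hypersequent with $\neg A$ simply removed, and use internal weakening (admissible from the underlying LK propositional base) to insert $A$ on the right. For $(s5')$ a further case split on which component of the conclusion contains $\neg A$ is needed: if $\neg A$ sits in the component $\Gamma \Ri$ with empty right-hand side, the same reapply-and-weaken device applies; if it sits in $\Gamma' \Ri \Pi'$ or in the context $G$, then $\neg A$ survives into the premise and the induction hypothesis is used directly.

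Cut rank is preserved because no new cuts are introduced along the construction: every inductive step either inherits an existing subproof or reapplies a non-cut rule, and weakening adds no cuts. The main obstacle is precisely the correct treatment of the context-restricting modal and external rules $(\Box R), (\bigcirc R), (\Bet), (s5')$: one must observe that the propositional connectives $\neg$ and $\to$ are transparent to the syntactic side conditions defining $\Gamma^{\Box}, \Gamma^O, \Gamma^{b\downarrow}$, so inversion passes through these rules by reapplication plus weakening rather than by descending into the premise.
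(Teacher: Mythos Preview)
Your inductive argument is the right shape and correctly isolates the delicate cases $(\Box R)$, $(\bigcirc R)$, $(\Bet)$, $(s5')$. There is, however, a gap at the contraction rules---both internal contraction (inherited from LK) and external contraction $(ec)$---which you subsume under ``any other rule whose premises retain $\neg A$ as a side formula.'' When the last inference contracts precisely the component (or the formula occurrence) carrying the distinguished $\neg A$, the premise contains \emph{two} copies of it. Applying the induction hypothesis once inverts only one copy; the two premise components are then no longer identical, so contraction cannot be reapplied, and you cannot invoke the hypothesis a second time on the partially inverted proof because its height may already exceed $|\mathcal D|$.

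The standard repair is either (i) to strengthen the induction hypothesis so as to invert an arbitrary multiset of designated left-occurrences of $\neg A$ simultaneously---at a contraction step you then designate both copies in the premise and reapply contraction afterwards---or (ii) to prove \emph{height-preserving} invertibility. Route (ii) needs height-preserving admissibility of internal weakening, which is not entirely routine here: in the $(s5')$ case the component $\Gamma\Rightarrow\;$ in the conclusion has empty succedent, and the formula $A$ you want to add on its right cannot be absorbed into a reapplication of $(s5')$, so a separate argument would be required. Route (i) avoids this detour and lets the rest of your proof go through unchanged. The paper itself does not spell out any of this, treating the lemma as folklore; the underlying issue is the same one that motivates the multicut formulation in the surrounding cut-elimination argument.
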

In $\HE$ any cut whose cut formula is immediately introduced in left and right premise can be replaced by smaller cuts. More formally,
\begin{lemma}
\label{reductivity}
Let $A$ be a compound formula and ${\mathcal D}_l$ and ${\mathcal D}_r$ be $\HE$ proofs such that $\rho({\mathcal D}_l) \leq \cp{A}$ and $\rho({\mathcal D}_r) \leq  \cp{A}$, and
\begin{enumerate}
\item   ${\mathcal D}_l$ is a proof of $\mg \h \Ga, A \seq \De$ ending in a rule introducing $A$ 
\item   ${\mathcal D}_r$ is a proof of $\mh \h  \Si \seq A, \Pi$ ending in  a rule introducing $A$ 
\end{enumerate}
We can find an $\HE$ proof of
$\mg \h  \mh \h  \Ga, \Si \seq \De, \Pi$  with $\rho({\mathcal D}) \leq \cp{A}$.
\end{lemma}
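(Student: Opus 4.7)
The plan is to proceed by a case analysis on the compound formula $A$ and on the last rules of ${\mathcal D}_l$ and ${\mathcal D}_r$, which by hypothesis introduce $A$ on the left and right respectively. Since no rule of $\HE$ introduces a $\Bet$-formula on the left, $A$ must be of the form $\neg B$, $B\to C$, $\Box B$, or $\bigcirc(B/C)$. In each case the cut on $A$ is replaced by one or more cuts on strict subformulas of $A$, which keeps the cut rank of the resulting proof bounded by $\cp{A}$.

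The propositional cases ($A = \neg B$ and $A = B\to C$) are routine: the premises of $(\neg L)/(\neg R)$ and $(\to L)/(\to R)$ expose $B$ and $C$ in swapped positions, so a single cut on $B$ (resp.\ successive cuts on $C$ and $B$ with internal contraction) produces the desired hypersequent, the new cuts having rank at most $\cp{A}$. For $A = \Box B$, I would combine the premise of $(\Box R)$, namely $H \h \Sigma^{\Box}, \Sigma^O \seq B$, with the premise of $(\Box L)$, namely $G \h \Gamma, \Box B, B \seq \Delta$, via a cut on $B$ of complexity $\cp{B}+1 = \cp{\Box B}$, producing $G \h H \h \Gamma, \Box B, \Sigma^{\Box}, \Sigma^O \seq \Delta$. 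Internal weakening then supplies the non-modal part of $\Sigma$ in the antecedent and the succedent $\Pi$, and the residual principal $\Box B$ kept by the T-style $(\Box L)$ is absorbed by admissible structural manipulations on the hypersequent.

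The main obstacle is the case $A = \bigcirc(B/C)$. The premise of $(\bigcirc R)$ is $H \h \Sigma^{\Box}, \Sigma^O, C, \Bet\,\neg C \seq B$, while $(\bigcirc L)$ has three premises supplying, respectively, $C$ and $\Bet\,\neg C$ in the succedent and $B$ in the antecedent. The reduction chains three cuts, on $C$, on $\Bet\,\neg C$, and on $B$, whose complexities $\cp{C}+1$, $\cp{C}+3$, and $\cp{B}+1$ are all strictly less than $\cp{\bigcirc(B/C)} = \cp{B}+\cp{C}+3$. The delicate part will be to perform these cuts in the right order, manage the duplicated copies of $\Gamma,\Delta$ and of the hypersequent contexts $G,H$ via internal and external contraction, and absorb the residual occurrences of $\bigcirc(B/C)$ retained by each premise of $(\bigcirc L)$. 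The hypersequent multi-cut formulation sketched in the proof idea, which allows one component to be cut against several components at once, is the natural device for carrying out this bookkeeping and for coping with components duplicated by $(ec)$.
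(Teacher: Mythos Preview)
Your case analysis omits precisely the case the paper singles out as the only non-trivial one: $A = \Bet\,B$. You dismiss it because no rule of $\HE$ has a $\Bet$-formula as an explicit left principal formula, but this overlooks how the $(\Bet)$ rule treats $\Bet$-formulas in the antecedent. The premise of $(\Bet)$ is $\Gamma^{\Box},\Gamma^O,\Gamma^{b\downarrow}\seq C \hh G$, so any $\Bet\,B\in\Gamma$ in the conclusion is replaced by $B$ in the premise via $\Gamma^{b\downarrow}$. Hence when ${\mathcal D}_l$ ends in $(\Bet)$ with the cut formula $\Bet\,B$ sitting in the active antecedent, the rule \emph{does} act as a left introduction of $\Bet\,B$: the premise carries $B$ rather than $\Bet\,B$, and the reduction must cut on $B$ against the premise of ${\mathcal D}_r$ (namely $H\hh\Sigma^{\Box},\Sigma^O,\Sigma^{b\downarrow}\seq B$) and then re-apply $(\Bet)$. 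That is exactly the transformation the paper displays. Without this case, Lemma~\ref{leftinv} cannot handle the step where the last rule of ${\mathcal D}_l$ is $(\Bet)$ and $A=\Bet\,B$, and the cut-elimination argument breaks.

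A smaller point: the residual principal copies kept by $(\Box L)$ and $(\bigcirc L)$ cannot be ``absorbed by admissible structural manipulations'' or by a multicut on $A$ while preserving $\rho\leq\cp{A}$, since a cut on $\Box B$ (resp.\ $\bigcirc(B/C)$) has complexity exactly $\cp{A}$. In the overall argument these residuals are discharged via the induction hypothesis of Lemma~\ref{leftinv} (note the phrasing ``(i.h.\ and) Lemma~\ref{reductivity}'' there), not within the present lemma standing alone.
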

\begin{proof}
We show the only non-trivial case: $A = \Bet \ B$, where a cut

$$
\footnotesize
\begin{prooftree}
    \[
     H \h \Sigma^\Box, \Sigma^O, \Sigma^{b\downarrow}, B \seq C
	\justifies
		G \h \Sigma, \Bet \ B \seq \Bet \ C, \Pi
		\using {\scriptscriptstyle \rm (\Bet)}
		\]
    \[
     H \h \Gamma^\Box, \Gamma^O, \Gamma^{b\downarrow} \seq B
	\justifies
		H \h \Gamma \seq \Bet \ B, \Delta
		\using {\scriptscriptstyle \rm (\Bet)}
		\]
		\justifies
		G \h H \h \Gamma, \Sigma \seq \Bet \ C, \Delta, \Pi
		\using {\scriptscriptstyle \rm (cut)}
		\end{prooftree}
		$$
		is replaced by
		$$
		\footnotesize
\begin{prooftree}
    \[
    H \h \Sigma^\Box, \Sigma^O, \Sigma^{b\downarrow}, B \seq C
    \hspace{0.5cm}  G \h \Gamma^\Box, \Gamma^O, \Gamma^{b\downarrow} \seq B
	\justifies
		G \h H \h \Sigma^\Box, \Sigma^O, \Sigma^{b\downarrow} \Gamma^\Box, \Gamma^O, \Gamma^{b\downarrow} \seq C
		\using {\scriptscriptstyle \rm (cut)}
		\]
		\justifies
		G \h H \h \Gamma, \Sigma \seq \Bet \ C, \Delta, \Pi
		\using {\scriptscriptstyle \rm (\Bet)}
		\end{prooftree}
		$$

\end{proof}
\begin{lemma}
\label{leftinv}
Let ${\mathcal D}_l$ and ${\mathcal D}_r$ be $\HE$ proofs such that:
\begin{enumerate}
\item   ${\mathcal D}_l$ is a proof of $\mg \h \Ga_1, A^{\lam_1} \seq \De_1 
\h \dots \h \Ga_n, A^{\lam_n} \seq \De_n$;
\item   $A$ is a compound formula and ${\mathcal D}_r:= \mh \h  \Si \seq A, \Pi$ ends with a right logical rule introducing an indicated occurrence of $A$
\item   $\rho({\mathcal D}_l) \leq \cp{A}$ and $\rho({\mathcal D}_r) \leq  \cp{A}$;
\end{enumerate}
Then we can construct an $\HE$ proof ${\mathcal D}$ of
$\mg \h  \mh \h  \Ga_1, \Si^{\lam_1} \seq \De_1, \Pi^{\lam_1} \h \dots 
\h \Ga_n, \Si^{\lam_n} \seq \De_n, \Pi^{\lam_n}$ with $\rho({\mathcal D}) \leq \cp{A}$.
\end{lemma}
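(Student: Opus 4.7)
The plan is to prove the lemma by induction on $|\mathcal{D}_l|$ with $\mathcal{D}_r$ held fixed, proceeding by case analysis on the last rule $r$ of $\mathcal{D}_l$. The base case, where $\mathcal{D}_l$ is an initial hypersequent $\mg' \h \Gamma_j, p \seq \Delta_j, p$, is immediate: since $A$ is compound the atomic axiomatic pair is untouched by the cut, and a sequence of internal and external weakenings adds the context $\mh$ together with the formulas $\Sigma^{\lambda_i}$ and $\Pi^{\lambda_i}$ to the appropriate components. No new cut is introduced, so the cut-rank bound $\rho \leq \cp{A}$ is trivially preserved.

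In the inductive step I would split on $r$ into three families. \textbf{(a)}~When $r$ is a propositional rule, a left modal/deontic rule, $(ew)$, $(ec)$, or a context-restricting right rule $(\Box R), (\bigcirc R), (s5'), (\Bet)$ such that the indicated occurrences of $A$ in the affected component survive into the premise --- this is the case precisely when $A$ has ``good shape'' for that rule, i.e.\ $A$ is of the form $\Box B$ or $\bigcirc(B/C)$ for $(\Box R), (\bigcirc R), (s5')$, and any of $\Box B, \bigcirc(B/C), \Bet B$ for $(\Bet)$ (with a bookkeeping adjustment for the $\Bet B$ case, where the $\lambda_i$ copies of $A$ reappear in the premise as $\lambda_i$ copies of $B$ inside $\Gamma^{b\downarrow}$) --- I would apply IH to each premise, dispatching the $\lambda_i$ counts, and then reapply $r$. \textbf{(b)}~When $r$ is $(\Box R), (\bigcirc R)$, or $(s5')$ acting on a component $i$ whose indicated $A$ equals $\Bet B$ and is therefore discarded from the premise: apply IH with $\lambda_i := 0$ on that component, reapply $r$, and recover the missing $\Sigma^{\lambda_i}$ on the left and $\Pi^{\lambda_i}$ on the right of component $i$ by internal weakening --- admissible precisely because $r$ forgets arbitrary extra left and right context on the affected component. \textbf{(c)}~When $r$ is a left logical rule introducing one of the indicated occurrences of $A$ in component $i$: apply IH to the premise(s) of $r$ with $\lambda_i$ decreased by one, and then combine the resulting derivation with $\mathcal{D}_r$ by a single application of Lemma~\ref{reductivity} on that principal occurrence; the cuts produced are on proper subformulas of $A$, so $\rho \leq \cp{A}$ is maintained.

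The main obstacle I expect is the bookkeeping across hypersequent-level rules, in particular $(ec)$ (which contracts two components each potentially carrying $A^{\lambda}$ copies into one with $A^{2\lambda}$) and $(s5')$ (which redistributes formulas between components, possibly concentrating $\lambda_i + \lambda_j$ copies of $A$ into a single premise component). Both are exactly what forces the statement of the lemma to be a multicut with arbitrary per-component $\lambda_i$: this generality absorbs component duplication and merging in one inductive step. A secondary subtlety is case~(c) for $A = \bigcirc(B/C)$, where $(\bigcirc L)$ has three premises and the principal $A$ persists on the left of each after the rule; IH must be applied to all three premises before Lemma~\ref{reductivity} is invoked, and one must check that the cuts it produces have complexity strictly below $\cp{\bigcirc(B/C)} = \cp{B}+\cp{C}+3$, which is immediate from the definition of $\cp{\cdot}$.
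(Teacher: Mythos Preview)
Your induction scheme and the three-way split are sound in spirit, but case~(a) has a genuine gap when $r$ is one of the context-restricting rules $(\Box R)$, $(\bigcirc R)$, $(s5')$, $(\Bet)$ and $A$ survives into the premise. You write ``apply IH to each premise \dots\ and then reapply $r$'', but the IH replaces each $A^{\lambda_i}$ in the affected component by $\Sigma^{\lambda_i}$ on the left and adds $\Pi^{\lambda_i}$ on the right. The resulting component is then \emph{not} of the shape required to reapply $r$: e.g.\ for $(\Box R)$ the premise component must contain only $\Box$- and $\bigcirc$-formulas on the left and a single formula on the right, whereas the IH may have deposited arbitrary formulas of $\Sigma$ on the left and all of $\Pi$ on the right. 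The same obstruction arises for $(\bigcirc R)$, $(\Bet)$, and for $(s5')$ the extraneous material ends up in the wrong component after splitting.

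The paper's fix is exactly the missing idea: since by hypothesis $\mathcal{D}_r$ ends in the right rule for $A$, when $A=\Box B$ (resp.\ $A=\bigcirc(B/C)$) one replaces $\mathcal{D}_r$ by the derivation of the ``good-shaped'' sequent $H \mid \Sigma^\Box,\Sigma^O \seq A$, obtained from the premise of $\mathcal{D}_r$'s last step by one application of that same rule. With this replacement the IH only adds $\Sigma^\Box,\Sigma^O$ on the left and nothing on the right, so $r$ \emph{can} be reapplied, and the remaining context is restored afterwards by weakening. Two smaller issues: your case split leaves propositional $A$ uncovered when $r\in\{(\Box R),(\bigcirc R),(s5'),(\Bet)\}$ (the paper avoids this by disposing of propositional $A$ via Lemma~\ref{invertibility} before the induction even starts); and your ``bookkeeping adjustment'' for $A=\Bet\,B$ with $r=(\Bet)$ does not work as stated, since the $\lambda_i$ copies landing in $\Gamma^{b\downarrow}$ are copies of $B$, not of $A$, so the IH does not apply to them.
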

\begin{proof}
We distinguish cases according to the shape of $A$.  If $A$ is of the form $\neg B$ or $B \to C$, the claim follows by  Lemmas~\ref{invertibility}  and \ref{reductivity}. If $A$ is $\Box B$, $\bigcirc (B/C)$ or $\Bet \ B$ the proof 
proceeds by induction on $\hgt{{\mathcal D}_l}$. If ${\mathcal D}_l$ ends in an initial sequent, then we are done.
If ${\mathcal D}_l$ ends in a left rule introducing one of the indicated cut formulas, the claim follows by (i.h. and) Lemma~\ref{reductivity}.
Otherwise, let $(r)$ be the  last inference rule applied in ${\mathcal D}_l$. 
The claim follows by the i.h., an application of $(r)$ and/or weakening. Some care is needed to handle the cases in which $r$ is $(s5')$, $(\Box R)$, $(\bigcirc R)$ or $(\Bet)$ and $A$ is not in the hypersequent context $G$. Notice that when $A = \Box B$ (resp. $A = \bigcirc (B/C)$) the conclusion of ${\mathcal D}_r$ is $\Sigma \seq \Box B, \Pi$ (resp. $\Sigma \seq \bigcirc (B/C), \Delta$), but we can safely use the "good"-shaped sequent $\Sigma^\Box, \Sigma^O \seq \Box B$ (resp. $\Sigma^\Box, \Sigma^O \seq \bigcirc (B/C)$), that allows cuts to be shifted upwards over all $\HE$ rules, and we apply weakening afterwards. 
When $A = \Bet \ B$, notice that the cut formula does not appear in the premises of these rules.
For example let $(r) = (s5')$, $A = \Box B$, and ${\mathcal D}_l$ ends as follows
\[\footnotesize
      \begin{prooftree}
       \[
         \using
              d'_l
      \proofdotseparation=1ex
      \proofdotnumber=3
      \leadsto
\mg \h \Gamma^{\square}, \Box B, \Gamma^O, \Gamma' \seq \Pi' \h \dots \h \Omega, \Box B \seq \Delta
        \]
      \justifies
\mg \h  \Gamma, \Box B \seq \h \Gamma' \seq \Pi' \h \dots \h \Omega, \Box B \seq \Delta
        \using { \scriptstyle (s5')}
      \end{prooftree}
      \]
The claim follows by i.h. applied to the conclusion $\mg \h \Gamma^{\square}, \Box B, \Gamma^O, \Gamma' \seq \Pi' \h \dots \h \Omega, \Box B \seq \Delta$ of $d'_l$ (and $\Sigma^O, \Sigma^{\Box} \seq \square B$), followed by an application of $(s5')$ and weakening. The case $A = \bigcirc (B/C)$ is the same. 
The cases involving $(\Box R)$, 
$(\bigcirc R)$ and $(\Bet)$ are handled in a similar way.

\end{proof}

\begin{lemma}
\label{reduction}
Let ${\mathcal D}_l$ and ${\mathcal D}_r$ be $\HE$ proofs such that:
\begin{enumerate}
\item   ${\mathcal D}_l$ is a proof of $\mg \h  \Ga, A \seq \De$;
\item   ${\mathcal D}_r$ is a proof of $\mh \h  \Si_1 \seq  A^{\lam_1}, \Pi'_1  \h  
\dots \h \Si_n \seq  A^{\lam_n}, \Pi'_n$;
\item   $\rho({\mathcal D}_l) \leq \cp{A}$ and $\rho({\mathcal D}_r) \leq  \cp{A}$.
\end{enumerate}
Then  a proof ${\mathcal D}$ can be constructed in $\HE$  of
$\mg \h  \mh \h  \Si_1, \Ga^{\lam_1} \seq \Pi'_1, \De^{\lam_1}
\h  \dots \h  \Si_n, \Ga^{\lam_n} \seq \Pi'_n, \De^{\lam_n}$
with $\rho({\mathcal D}) \leq  \cp{A}$.
\end{lemma}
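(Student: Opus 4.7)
The plan is to prove Lemma~\ref{reduction} by induction on the height $\hgt{{\mathcal D}_r}$, keeping ${\mathcal D}_l$ fixed and analysing the last rule $(r)$ of ${\mathcal D}_r$. The overall strategy is symmetric to the one behind Lemma~\ref{leftinv}: shift the multicut upwards through ${\mathcal D}_r$ until $A$ becomes principal in a right introduction rule; at that point hand the situation over to Lemma~\ref{leftinv}, which pushes the cut through ${\mathcal D}_l$ until a left introduction of $A$ is met and Lemma~\ref{reductivity} applies. The base case is immediate: if ${\mathcal D}_r$ is an initial hypersequent, the conclusion follows by internal and external weakening, since any axiom atom different from $A$ survives the addition of $\Ga^{\lam_j}$ on the left and $\De^{\lam_j}$ on the right.

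For the inductive step, first consider the case in which $(r)$ does not introduce any of the indicated occurrences of $A$ on the right. I then apply the i.h.\ to each premise of $(r)$, cutting it against ${\mathcal D}_l$ with the appropriate multiplicities, and re-apply $(r)$. A few rules require more care: for internal contraction on $A$ the multiplicity $\lam_j$ increases by one; for external contraction $(ec)$ collapsing two copies of a component whose right-hand side contains $A^{\lam_j}$, the i.h.\ is invoked with multiplicity $\lam_j + \lam_j$. For the rules $(s5')$, $(\Box R)$, $(\bigcirc R)$ and $(\Bet)$, which require premises of a specific ``good'' shape, the cut formula $A$ sits on the right of a component and is not affected unless the rule itself introduces $A$, so the cut can be permuted past these rules without disturbing the shape constraint.

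Next, if $(r)$ is a right rule introducing one occurrence of $A$ in some component $\Si_j \seq A^{\lam_j}, \Pi'_j$, I first use the i.h.\ on the premise of $(r)$ to strip away the remaining $\lam_j - 1$ occurrences of $A$ in that component, reducing to the situation $\lam_j = 1$. The resulting premise then has the form required by Lemma~\ref{leftinv}, which I invoke with ${\mathcal D}_l$ as the left proof to obtain a derivation of cut rank $\leq \cp{A}$; a final application of $(r)$ together with weakening yields the desired conclusion. The other right components $\Si_i \seq A^{\lam_i}, \Pi'_i$ with $i \neq j$ are handled uniformly by the i.h.\ as in the non-principal case.

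The main obstacle will be the careful bookkeeping of the occurrence counts $\lam_i$ through rules that copy or merge components---particularly $(ec)$ and $(s5')$---and verifying that, whenever $(r)$ is not principal for $A$, the ``good'' shape constraint on the premises of the modal and deontic rules is preserved after the cut has been shifted upwards. Once this bookkeeping is spelled out, the interplay with Lemma~\ref{leftinv} and Lemma~\ref{reductivity} is routine and closes the induction.
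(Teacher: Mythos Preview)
Your approach matches the paper's: induction on $\hgt{{\mathcal D}_r}$, with the non-principal cases handled by the i.h.\ plus re-application of $(r)$ (and weakening for the shape-restricted rules $(s5')$, $(\Box R)$, $(\bigcirc R)$, $(\Bet)$), and the principal case delegated to Lemma~\ref{leftinv}.

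Your description of the principal case, however, has the steps in an order that does not work. You write: apply the i.h.\ to the premise, then invoke Lemma~\ref{leftinv}, then ``a final application of $(r)$''. But Lemma~\ref{leftinv} requires its right derivation to \emph{end} in a right-introduction rule for $A$; the output of the i.h.\ applied to the premise of $(r)$ does not have this form (its last rule is whatever lies above $(r)$ in ${\mathcal D}_r$, and the principal $A$ is absent there). Nor does applying $(r)$ \emph{after} Lemma~\ref{leftinv} make sense, since that would reintroduce an occurrence of $A$ into the conclusion. The correct sequence is: (i) apply the i.h.\ to the premise(s) of $(r)$, eliminating \emph{all} indicated $A$'s there---not only the $\lambda_j-1$ in component $j$ (if any), but also those in the components $i\neq j$; (ii) re-apply $(r)$, producing a proof that ends in a right-introduction of a single occurrence of $A$; (iii) now Lemma~\ref{leftinv} applies with ${\mathcal D}_l$ as the left proof; (iv) clean up with internal weakening and external contraction. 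With this reordering your argument goes through; the paper's one-line ``using Lemma~\ref{leftinv} when $A$ is principal'' abbreviates exactly this sequence.
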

\begin{proof}
Let $(r)$ be the last inference rule applied in ${\mathcal D}_r$. If $(r)$ is an axiom, then the claim holds trivially. Otherwise, we proceed by induction on $\hgt{{\mathcal D}_r}$ , using Lemma~\ref{leftinv} when (one of) the indicated occurrence(s) of $A$ is principal. Assume $A$ is not principal. If $(r)$ acts only on $\mh$ or is a  rule other than $(s5')$, $(\Box R)$, $(\bigcirc R)$ and $(\Bet)$  the claim follows by the i.h. and an application of $(r)$. For the remaining rules notice that $A$ is not in the rule premise
(in case of $(s5')$ the "critical" component in the conclusion has empty right-hand side), hence the claim follows by applying (the i.h. to the other components, and) the respective rule followed by weakening.
\end{proof}

\begin{theorem}[Cut Elimination]
\label{th:cutelim}
Cut elimination holds for $\HE$.
\end{theorem}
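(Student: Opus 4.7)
The plan is a standard cut-reduction argument, with all the hypersequent-specific work already done in Lemmas~\ref{invertibility}--\ref{reduction}. We prove by induction on $\rho({\mathcal D})$ that any proof ${\mathcal D}$ of a hypersequent $H$ can be transformed into a cut-free proof of $H$. If $\rho({\mathcal D}) = 0$ there is nothing to do. Otherwise, pick a topmost cut in ${\mathcal D}$ whose cut-formula $A$ has maximal complexity, so that $\cp{A}+1 = \rho({\mathcal D})$. Since this cut is topmost, its immediate subproofs ${\mathcal D}_l$ (of $G \h \Ga, A \seq \De$) and ${\mathcal D}_r$ (of $H \h \Si \seq A, \Pi$) satisfy $\rho({\mathcal D}_l), \rho({\mathcal D}_r) \leq \cp{A}$.

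Now apply Lemma~\ref{reduction} with $n = 1$ and $\lam_1 = 1$ to ${\mathcal D}_l$ and ${\mathcal D}_r$: this yields a proof of $G \h H \h \Ga, \Si \seq \De, \Pi$ whose cut rank is at most $\cp{A} < \rho({\mathcal D})$. Replacing the subproof ending in the chosen cut by this new subproof, and iterating over all topmost cuts of maximal complexity in ${\mathcal D}$ (which only finitely many), produces a proof ${\mathcal D}'$ of $H$ with $\rho({\mathcal D}') < \rho({\mathcal D})$. The induction hypothesis then gives a cut-free proof of $H$.

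The real work has been absorbed into the preparatory lemmas, so the main obstacle is not in this theorem itself but in their design: because $(s5')$, $(\Box R)$, $(\bigcirc R)$ and $(\Bet)$ reshape the left-hand side of their premise, a cut cannot in general be pushed upward over these rules, and external contraction $(ec)$ forces the reduction argument to track multiple occurrences of the cut-formula across distinct components of a hypersequent. This is exactly what is encoded in the multiplicities $\lam_1, \ldots, \lam_n$ in Lemmas~\ref{leftinv} and~\ref{reduction} (the hypersequent multicut), and why Lemma~\ref{reduction} first shifts the cut above ${\mathcal D}_r$ until $A$ is principal on the right, then invokes Lemma~\ref{leftinv} to shift above ${\mathcal D}_l$ until $A$ is also principal on the left, at which point Lemma~\ref{reductivity} replaces the cut by cuts on strictly smaller formulas. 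With these ingredients in place, the Cut Elimination theorem follows by the short induction above.
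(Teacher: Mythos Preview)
Your proof is correct and follows the same strategy as the paper's: both reduce an uppermost cut of maximal rank via Lemma~\ref{reduction}, which strictly decreases the number of cuts of maximal rank (and eventually the rank itself). The paper packages this as a lexicographic double induction on $\langle \rho({\mathcal D}), n\rho({\mathcal D}) \rangle$, while you phrase it as a single induction on $\rho({\mathcal D})$ with an inner iteration over the maximal-rank cuts; these are equivalent formulations of the same argument.
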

\begin{proof}
Let ${\mathcal D}$ be an $\HE$ proof with $\rho({\mathcal D}) > 0$. We
proceed by a double induction on $\langle \rho({\mathcal D}), n\rho({\mathcal D}) \rangle$, where
$n\rho({\mathcal D})$ is the number of applications of cut in ${\mathcal D}$ with
cut rank $\rho({\mathcal D})$. Consider an uppermost application of $(cut)$ in
${\mathcal D}$ with cut rank $\rho({\mathcal D})$.
By applying  Lemma~\ref{reduction} to its premises
either $\rho({\mathcal D})$ or $n\rho({\mathcal D})$ decreases. 
\end{proof}

\begin{corollary}[Completeness]
Each  theorem of $\E$ has a proof in $\HE$.
\end{corollary}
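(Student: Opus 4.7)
The plan is to obtain the corollary as an immediate consequence of combining Theorem~\ref{th:completeness} with Theorem~\ref{th:cutelim}. First, given any theorem $A$ of $\E$, I would invoke Theorem~\ref{th:completeness} to secure a proof $\mathcal{D}$ in $\HE$ augmented with the cut rule, whose end-hypersequent is $\seq A$. Second, I would apply Theorem~\ref{th:cutelim} to $\mathcal{D}$ to transform it into a cut-free proof $\mathcal{D}'$ of the same end-hypersequent in $\HE$ proper. Since $\mathcal{D}'$ contains no applications of cut, it is a proof in $\HE$ as defined (without cut), so $A$ is derivable in $\HE$.

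There is essentially no obstacle here, since both ingredients have already been established in the preceding two theorems. The only thing worth noting is that Theorem~\ref{th:cutelim} was stated for arbitrary $\HE$ proofs with $\rho(\mathcal{D})>0$, and its proof proceeds by a double induction that terminates in a proof with $\rho(\mathcal{D}')=0$, i.e.\ a genuinely cut-free proof; hence the elimination procedure applies uniformly to the proof furnished by Theorem~\ref{th:completeness}. The resulting derivation enjoys the (relaxed) subformula property highlighted earlier, which is what makes the corollary substantive rather than a mere restatement of Theorem~\ref{th:completeness}.
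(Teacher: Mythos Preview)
Your proposal is correct and matches the paper's intended argument: the corollary is meant to follow immediately by combining Theorem~\ref{th:completeness} (completeness with cut) with Theorem~\ref{th:cutelim} (cut elimination), and the paper does not spell out any further details.
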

\section{A proof search oriented calculus for $\E$}
\label{alternative}
The properties of the calculus $\HE$ include modularity, cut-elimination and a completeness proof which is independent from the semantics of $\E$. However $\HE$   supports neither automated proof search nor counterexample constructions.

Here we introduce the calculus $\HE^+$
having terminating proof search, 
thereby providing a decision procedure for $\E$, and in case of termination with failure
%
a countermodel of the starting formula can be extracted checking a \emph{single} failed derivation.
Similarly to the calculus for $\5$ in~\cite{DBLP:journals/igpl/KuznetsL16}, $\HE^+$ is obtained by making in $\HE$ all rules invertible, and all structural rules (including the external ones) admissible. Looking at the rules bottom up, this is achieved by copying the introduced formulas and the component containing it in the rule premises; the "simulation" of  
$(s5')$ is obtained by introducing additional left rules for $\Box$ and $\bigcirc (A/B)$ which add subformulas to different components of the hypersequent. 

Using $\HE^+$ we will show that the validity problem of $\E$ is co-NP. 

\begin{definition}
The $\HE^+$ calculus consists of: the initial hypersequents $\Gamma, p \Ri   \Delta, p \hh G $, together with the following rules: 

\begin{itemize}	
\item Rules for the propositional connectives that repeat the introduced formulas in the premises, for example

%

	$$\small \irule{\Gamma, A \ri B  \Ri \Delta, A \hh G \quad \Gamma, A \ri B, B  \Ri \Delta \hh G}
	{\Gamma,  A \ri B \Ri   \Delta \hh 
	G}{(\ri L)}  \qquad  
	\irule{\Gamma, A\Ri \Delta, A \ri B, B,  \hh G}
	{\Gamma \Ri   \Delta,  A \ri B \hh  G}{(\ri R)}$$
	\item  Rules for $O$

		$$\small\irule{\Gamma \Ri \bigcirc (B / A),\Delta \hh A, \Bet \ \neg A \Ri B \hh G}{\Gamma  \Ri \bigcirc (B / A), \Delta\hh G}{(\bigcirc R+)}$$
		
		$$\small\irule{ \Gamma, \bigcirc (B / A) \Ri   \Delta, A \hh G    \quad 
			\Gamma, \bigcirc (B / A) \Ri   \Delta, \Bet \  \neg A\hh G \quad 
			\Gamma, \bigcirc (B / A), B \Ri   \Delta\hh G}
		{\Gamma, \bigcirc (B / A) \Ri  \Delta\hh G}{(\bigcirc L+)}$$
		
		{$$\scriptsize\irule{ \Gamma, \bigcirc (B / A) \Ri \Delta \hh  \Sigma \Ri \Pi, A \hh G    \quad 
			\Gamma, \bigcirc (B / A) \Ri   \Delta  \hh  \Sigma \Ri \Pi, \Bet \  \neg A\hh G \quad 
			\Gamma, \bigcirc (B / A) \Ri   \Delta \hh \Sigma, B \Ri \Pi\hh G}
		{\Gamma, \bigcirc (B / A) \Ri  \Delta \hh \Sigma \Ri \Pi \hh G}{(\bigcirc L2)}$$}

	\item  Rule for $\Bet$	

		$$\small \irule{\Gamma\Ri \Delta, \Bet \ A \hh \Gamma^{b\downarrow} \Ri  A \hh G }
		{\Gamma \Ri   \Delta, \Bet \ A \hh G}{(\Bet+)}$$
		
		\item  Rules for $\Box$	
		
		$$\small \irule{\Gamma\Ri \Delta, \Box A \hh  \Ri  A\hh G}
		{\Gamma \Ri   \Delta, \Box A\hh G}{(\Box R+)}
		\quad
		 \irule{\Gamma, \Box A, A  \Ri \Delta\hh G}
		{\Gamma, \Box A \Ri   \Delta\hh G}{(\Box L+)}
		\quad  
		 \irule{\Gamma, \Box A \Ri \Delta \hh \Sigma, A \Ri \Pi \hh G}
		{\Gamma, \Box A \Ri   \Delta \hh \Sigma \Ri \Pi \hh G}{(\Box L2)}$$
		
	\end{itemize} 
	\end{definition}
	The notion of proof and derivation is as for $\HE$.
The following lemma collects standard structural properties of $\HE^+$.

\begin{lemma}

\label{Lem:Admiss}
(i) All rules of  $\HE^+$ are height-preserving invertible. (ii)  rules applications  permute over each other (with the usual exceptions).
(iii) Internal and external weakening and contraction are admissible in  $\HE^+$.
\end{lemma}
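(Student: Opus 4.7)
The plan is to prove the three clauses in the order (i), (iii), (ii), since permutability is a direct consequence of height-preserving invertibility, and the admissibility of contraction will rely crucially on (i). By design of $\HE^+$, every rule copies the principal formula(s) into the premises and (in the case of the $+$-rules and the $2$-rules) also preserves the whole component in which the principal formula appears. This property is what makes all the arguments go through uniformly.

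For (i), I would prove for each rule $(r)$ of $\HE^+$ and each premise $P$ of $(r)$: if $\mathcal{D}$ is an $\HE^+$ proof of the conclusion of $(r)$ of height $n$, then there is a proof of $P$ of height at most $n$. The proof is by induction on $n$. Base case: if the conclusion is an initial hypersequent $\Gamma, p \seq \Delta, p \hh G$, then every premise of $(r)$ remains initial, since the atomic formulas $p$ on each side are preserved in the premises (and, when new components are introduced as in $(\Bet+)$ or $(\Box R+)$, the old components, and hence the axiomatic pair, are still present). Inductive step: let $(r')$ be the last rule of $\mathcal{D}$. If $(r') = (r)$ applied on the same principal formula(s), then the required premise is a sub-proof of $\mathcal{D}$. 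Otherwise, apply the induction hypothesis to the premises of $(r')$ to invert $(r)$ at smaller height, and then reapply $(r')$. The cases that need a bit of attention are the rules that add new components (the $+$-rules and the $2$-rules) since invertibility of a rule like $(\Bet+)$ must accommodate an arbitrary $G$ above; but since the copy of $\Gamma \seq \Delta, \Bet A$ is still present in each premise, inversion just picks the branch carrying the new component $\Gamma^{b\downarrow} \seq A$.

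For (iii), I would show first that internal and external weakening are height-preserving admissible, by induction on the height of the given proof. In the base case, adding formulas to an initial hypersequent leaves it initial. In the step, the weakened hypersequent remains compatible with the last rule (possibly with a larger $\Gamma^{\Box}$, $\Gamma^O$, or $\Gamma^{b\downarrow}$ in the premises, which is harmless since those premises can themselves be weakened by induction). External weakening (adding an extra component) is immediate. Contraction (both internal and external) is then proved by induction on the complexity of the contracted formula with a subsidiary induction on the height, using (i): to contract two copies of a compound $A$, invert the proof on one occurrence down to its immediate subformulas, then apply the induction hypothesis on smaller formulas; for atomic contraction and external contraction the argument is a direct induction on height. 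Clause (ii) then follows from (i) by the standard argument: to swap two rule applications, invert the lower one by (i) and reapply it after the upper one; the "usual exceptions" are the well-known cases where one rule's premise literally does not have room for the other (e.g., a component created by a modal-style rule cannot be permuted above a rule acting on the discarded formulas of the parent component).

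The main obstacle is the interaction between weakening/inversion and the "extraction" operations $\Gamma^{\Box}$, $\Gamma^O$, $\Gamma^{b\downarrow}$ used in the premises of $(\Box R+)$, $(\bigcirc R+)$, $(\bigcirc L+)$, $(\bigcirc L2)$, $(\Box L2)$ and $(\Bet+)$: weakening by a $\Box$-formula, an $\bigcirc$-formula, or a $\Bet$-formula changes these sets, so one must verify that the inductive reapplication of these rules is still sound (it is, because the added formula only enlarges $\Gamma^{\Box}$, $\Gamma^O$, or $\Gamma^{b\downarrow}$ in the premise, to which the i.h. applies). Once this bookkeeping is done, invertibility, admissibility and permutation all follow by the routine inductions sketched above.
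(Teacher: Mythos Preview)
Your proposal is correct and follows the standard Troelstra--Schwichtenberg template for G3-style systems. The paper, however, takes a much shorter route for~(i). You already isolate the key design feature of $\HE^+$: every rule copies its principal formula into the premises and preserves the whole active component. The paper exploits this observation to its full extent: since every premise of every rule is literally an extension of the conclusion (obtained by adding formulas to an existing component or by adding a fresh component), height-preserving invertibility is an \emph{immediate} consequence of height-preserving admissibility of weakening---given a proof of the conclusion, just weaken to obtain each premise at the same height. No permuting of $(r)$ through the last rule $(r')$ is required. Thus the paper's implicit order is: hp-weakening first (a routine induction on height), then (i) in one line, then contraction via (i), then (ii). Your explicit inductive argument for (i) is not wrong, merely unnecessary; what it buys you is independence of (i) from weakening, at the price of the case analysis that the ``kleene'd'' formulation was meant to eliminate.
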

\begin{proof}
(i) Follows by the fact that the premises already contain the conclusion. (ii) and (iii) are standard (and hence omitted). 
\end{proof}
As a consequence of this lemma the order of application of the rules is 
irrelevant.
\vspace{-0.2cm}	\begin{theorem}
If there is a proof of $H$ in $\HE^+$ then $I(H)$ is valid
\end{theorem}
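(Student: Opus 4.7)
The plan is to mimic the proof of Theorem~\ref{Th:soundnessHE}, proceeding by induction on the height of the $\HE^+$ proof and showing that each rule of $\HE^+$ preserves validity of the interpretation $I(\cdot)$. The axioms are trivially valid, and the propositional rules of $\HE^+$ are immediate because repeating the introduced formula in the premise can only strengthen what is assumed. The rules $(\Box L+)$ and $(\bigcirc L+)$ coincide with their $\HE$-counterparts (modulo keeping the principal formula in the premise), so their soundness has already been covered in Theorem~\ref{Th:soundnessHE}.

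The genuinely new cases are the "plus" right rules and the two shifting rules $(\Box L2)$ and $(\bigcirc L2)$. For $(\bigcirc R+)$, $(\Box R+)$ and $(\Bet+)$ the argument is a minor adaptation of the corresponding cases in Theorem~\ref{Th:soundnessHE}: assuming the conclusion is refuted at some $x$ in a model $M$, pick the world $y$ witnessing the failure of the distinguished component, and then reproduce the reasoning already done for $(\bigcirc R)$, $(\Box R)$, $(\Bet)$ on the new component in the premise. The only difference is that the original component is retained in the premise, but this is harmless: one can use the premise's disjunct for the new component exactly as before to derive a contradiction. I would write out the case of $(\bigcirc R+)$ in detail, since it is the most delicate, and just sketch $(\Box R+)$ and $(\Bet+)$.

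For the hypersequent "jump" rules $(\Box L2)$ and $(\bigcirc L2)$, the key semantic observation is that both $\Box A$ and $\bigcirc (B/C)$ denote world-independent propositions in preference models: $\Box A$ by the universal quantification in its clause, and $\bigcirc (B/C)$ because $\mathrm{best}(C)$ depends only on $M$. So suppose $(\Box L2)$'s conclusion fails at $x$ in $M$; then there exist $y$ with $y\models \bigwedge\Gamma \land \Box A$, $y\not\models\bigvee\Delta$ and $z$ with $z\models\bigwedge\Sigma$, $z\not\models\bigvee\Pi$. From $y\models\Box A$ we get $z\models A$, which refutes the second component of the premise, while $y$ still refutes the first, contradicting validity of the premise. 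The case of $(\bigcirc L2)$ is identical, with "$z\models A$" replaced by "$z\models \bigcirc (B/C)$", justified by world-independence of $\bigcirc$.

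The main obstacle, as in Theorem~\ref{Th:soundnessHE}, lies in keeping track of which parts of $\Gamma$ are "exported" in the modal rules $(\bigcirc R+)$ and $(\Bet+)$: one must verify that the $\Box$-, $O$- and $\Bet$-prefixed formulas of the principal component carry over to the newly created component, which requires invoking the globality of $\Box$ and $\bigcirc$ as well as the semantics of $\Bet$ (i.e., $y\models\Bet C$ transfers $C$ to every $z\succ y$). Once these transfers are justified, the remaining bookkeeping is routine, and the theorem follows by induction.
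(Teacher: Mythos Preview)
Your approach is correct but takes a different route from the paper. You verify soundness of each $\HE^+$ rule semantically, essentially redoing the induction of Theorem~\ref{Th:soundnessHE} rule by rule. The paper instead argues \emph{syntactically}: it shows that every $\HE^+$ rule is derivable in $\HE$---for most rules this is immediate via internal/external weakening and contraction, while for $(\Box L2)$ and $(\bigcirc L2)$ an explicit simulation is given using $(\Box L)$ (respectively $(\bigcirc L)$), $(s5')$, weakening, and $(ec)$---and then invokes Theorem~\ref{Th:soundnessHE} once. The syntactic route is shorter and reuses work already done; your semantic route is equally valid but duplicates effort.

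Two minor inaccuracies in your write-up. First, your final paragraph overstates the ``exporting'' needed for $(\bigcirc R+)$ and $(\Box R+)$: unlike their $\HE$ counterparts, these $\HE^+$ rules do \emph{not} carry $\Gamma^\Box$ or $\Gamma^O$ into the newly created component (that job is delegated to the jump rules $(\Box L2)$, $(\bigcirc L2)$ during proof search), so the semantic soundness check is actually easier than you suggest; only $(\Bet+)$ exports anything, namely $\Gamma^{b\downarrow}$. Second, calling $(\bigcirc L2)$ ``identical'' to $(\Box L2)$ glosses over the fact that $(\bigcirc L2)$ has three premises: the argument uses world-independence to transfer $\bigcirc(B/A)$ from $y$ to $z$, and then case-splits on whether $z\not\models A$, $z\not\models\Bet\neg A$, or $z\models B$, refuting one of the three premises in each case.
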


\begin{proof}
    We first show that the rules of $\HE^+$ can be simulated in $\HE$. This holds
    for all the $\HE^+$ rules but $(\bigcirc L 2)$ and $(\Box L 2)$ 
    by simply applying weakening, internal and external contraction. For $(\Box L2)$ we have

\begin{prooftree}\scriptstyle
        \[\[\[ \[
G \h \Gamma, \Box A \seq \Delta \h \Sigma, A \seq \Pi
                \justifies
G \h \Gamma, \Box A \seq \Delta \h \Sigma, \Box A \seq \Pi
\using
    {\scriptscriptstyle \rm (\Box L )}
                \]
                \justifies
G \h \Gamma, \Box A \seq \Delta \h \Box A \seq  \h \Sigma \seq \Pi
\using
    {\scriptscriptstyle \rm (S5')}
                \]
                \justifies
G \h \Gamma, \Box A \seq \Delta \h \Gamma, \Box A \seq \Delta \h \Sigma \seq \Pi
    \using
{\scriptscriptstyle \rm (w)}
            \]
                \justifies
G \h \Gamma, \Box A \seq \Delta \h \Sigma \seq \Pi
\using {\scriptscriptstyle \rm (ec)}\]
		\end{prooftree}

    The argument for $(\bigcirc L2)$ is analogous.
    The claim follows by Theorem~\ref{Th:soundnessHE}. 
\end{proof}
We have adopted a "kleene'd" formulation of the calculus to make easier countermodel construction and termination of proof-search.
They are both based on the notion of \emph{saturation} that we define next. 
Given a hypersequent $H$, we write $\Gamma \Ri \Delta\in H$ to indicate that $\Gamma \Ri \Delta$ is a compontent of $H$.

\begin{definition}[Saturation]
	A hypersequent $H$ is \emph{saturated} if it is not  an axiom and satisfies the following conditions associated to each rule application 
	\begin{description}
		\item[$(\ri L)_S$] if  $\Gamma, A \ri B  \Ri \Delta \in H$ then either $A\in \Delta$ or $B\in \Gamma$
		\item[$(\ri R)_S$] if  $\Gamma\Ri \Delta, A \ri B   \in H$ then  $A\in \Gamma$ and $B\in \Delta$
		\item[$(\neg L)_S$] if  $\Gamma, \neg A \Ri \Delta \in H$ then  $A\in \Delta$
		\item[$(\neg R)_S$] if  $\Gamma\Ri \Delta, \neg A  \in H$ then  $A\in \Gamma$ 
		\item[$(\bigcirc L+)_S$] if  $\Gamma, \bigcirc (B / A) \Ri  \Delta   \in H$ then either $A\in \Delta$ or $\Bet \neg A\in \Delta$ or $B\in \Gamma$
		\item[$(\bigcirc L2)_S$] if  $\Gamma, \bigcirc (B / A)  \Ri  \Delta   \in H$ and $\Sigma\Ri\Pi\in H$ then either $A\in \Pi$ or $\Bet \neg A\in \Pi$ or $B\in \Sigma$
		\item[$(\bigcirc R+)_S$] if  $\Gamma  \Ri \bigcirc (B / A), \Delta   \in H$ then there is $\Sigma\Ri\Pi\in H$ such that $A\in \Sigma$,  $\Bet \neg A\in \Sigma$, and  $B\in \Pi$
		\item[$(\Bet+)_S$] if $\Gamma \Ri   \Delta, \Bet A \in H$ then there is $\Sigma\Ri\Pi\in H$ such that $\Gamma^{b\downarrow} \subseteq \Sigma$ and $A\in \Pi$
		\item[$(\Box R+)_S$] if $\Gamma \Ri   \Delta, \Box A \in H$ then there is $\Sigma\Ri\Pi\in H$ such that $A\in \Pi$
		\item[$(\Box L+)_S$] if $\Gamma, \Box A \Ri   \Delta  \in H$ then $A\in\Gamma$
		\item[$(\Box L2)_S$] if $\Gamma, \Box A \Ri   \Delta  \in H$ and $\Sigma\Ri\Pi\in H$ then   $A\in\Sigma$
	\end{description}
\end{definition}	

The key to obtain termination is to avoid the application of a rule to hypersequents which in a sense already contain the premise of that rule. 

\begin{definition}[Redundant application]
	A backward application of a rule $(R)$ to an hypersequent $H$ is \emph{redundant} if $H$ satisfies the saturation condition $(R)_S$ associated to that application of $(R)$.
\end{definition}

We call a derivation/proof \emph{irredundant} if (i) no rule is applied to an axiom, and (ii) it does not contain any redundant application of rule. It is easy to see that by the admissibility  of internal weakening and external contraction (Lemma~\ref{Lem:Admiss}) redundant applications of the rules can be safely removed.

%

\begin{lemma}
	 Every hypersequent provable in $\HE^+$ has an irredundant proof.
\end{lemma}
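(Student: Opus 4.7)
The plan is to transform any given $\HE^+$ proof $\pi$ of $H$ into an irredundant one by iteratively removing two kinds of defects: (a) rule applications whose conclusion is already an axiom, and (b) redundant rule applications. Both removals exploit the admissibility of internal weakening and internal/external contraction established in Lemma~\ref{Lem:Admiss}.

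For (a), whenever a rule application has a conclusion that is already an axiom, I would simply truncate the derivation at that node and mark the axiom as a leaf. For (b), the main observation to establish, by case analysis on the rules of $\HE^+$, is the following: if $(R)$ is applied with conclusion $H'$ satisfying the saturation condition $(R)_S$, then at least one premise $H_i$ of $(R)$ can be turned into $H'$ by a (possibly empty) sequence of internal weakenings followed by internal/external contractions. Granted this, the sub-derivation of $H_i$ already present in $\pi$ yields, by admissibility, a derivation of $H'$ that can be substituted for the redundant inference (and the sub-derivations of the other premises discarded).

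Two representative cases illustrate the pattern. If $(\to L)$ is redundant because $A\in\Delta$, then its first premise $\Gamma, A\to B \Rightarrow \Delta, A \h G$ collapses to $H'$ by a single internal contraction on $A$. If $(\Bet+)$ is redundant with witness $\Sigma \Rightarrow \Pi \in H'$ satisfying $\Gamma^{b\downarrow} \subseteq \Sigma$ and $A \in \Pi$, the fresh component $\Gamma^{b\downarrow} \Rightarrow A$ in the premise is first internally weakened to $\Sigma \Rightarrow \Pi$ and then externally contracted with the existing occurrence of $\Sigma \Rightarrow \Pi$ in $H'$ (or with the main component, should the witness happen to coincide with it). The remaining rules---the three-premise deontic rule $(\bigcirc L+)$, the cross-component rules $(\bigcirc L2)$ and $(\Box L2)$, and the component-spawning rules $(\bigcirc R+)$ and $(\Box R+)$---are handled analogously: the clause of the saturation condition that holds pinpoints exactly which premise realizes the conclusion as a weakening/contraction variant.

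A straightforward induction on the number of redundant rule applications in $\pi$, which strictly decreases at each local replacement, then delivers the desired irredundant proof of $H$. The only genuine obstacle is mechanical: a careful rule-by-rule verification of the reduction claim, with some extra care needed for the component-spawning rules, where the internal weakening of the newly-added component must be calibrated to match the witness component in $H'$ exactly before external contraction can merge the two copies.
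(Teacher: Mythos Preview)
Your approach coincides with the paper's: eliminate a redundant application by observing that, thanks to the relevant saturation clause, one of its premises can be turned into its conclusion using admissible weakening and (internal or external) contraction; your treatment of $(\Bet+)$ is precisely the case the paper chooses to spell out.

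The one weak point is the termination measure. You induct on the \emph{number} of redundant applications and assert it strictly decreases at each local replacement. For the propositional and left rules this is unproblematic---only contraction is needed, and deleting a duplicate formula or component cannot manufacture new redundancy. But for the component-spawning rules $(\Bet+)$, $(\bigcirc R+)$, $(\Box R+)$ your replacement first \emph{internally weakens} the freshly created component up to the witness $\Sigma\Rightarrow\Pi$ before externally contracting. Height-preserving weakening threads the added formulas upward through the entire premise subderivation, and this can flip a previously non-redundant application there into a redundant one (e.g.\ an instance of $(\neg L)$ on some $\neg C\in\Gamma^{b\downarrow}$ becomes redundant the moment $C\in\Pi$ is injected into the succedent of that component). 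Hence the redundancy count need not drop after the replacement. The paper sidesteps this by inducting instead on the height of an uppermost redundant application: since weakening and contraction are height-preserving and the redundant inference node itself disappears, the subderivation at that point strictly shrinks, and any redundancies created by the weakening live at strictly smaller height. Inducting on the total number of inference nodes in the proof would serve equally well.
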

\begin{proof}
By induction on the height of a uppermost redundant application. To illustrate the argument consider a redundant application of the $(\Bet+)$ rule
	$$\irule{\Gamma\Ri \Delta, \Bet \ A \hh \Gamma^{b\downarrow} \Ri  A \hh \Gamma^{b\downarrow},\Sigma' \seq \Pi', A \hh G}
{\Gamma \Ri   \Delta, \Bet \ A \hh \Gamma^{b\downarrow},\Sigma' \seq \Pi', A \hh G}{(\Bet+)}$$

this is transformed as follows: \\

\begin{prooftree}
\[	\Gamma\Ri \Delta, \Bet \ A \hh \Gamma^{b\downarrow} \Ri  A \hh \Gamma^{b\downarrow},\Sigma' \seq \Pi', A \hh G
\justifies
\Gamma\Ri \Delta, \Bet \ A \hh\Gamma^{b\downarrow},\Sigma' \seq \Pi', A \hh \Gamma^{b\downarrow},\Sigma' \seq \Pi', A \hh G
	\using (Wk)
\]
	\justifies
	\Gamma\Ri \Delta, \Bet \ A  \hh \Gamma^{b\downarrow},\Sigma' \seq \Pi', A \hh G
	\using (ec)
\end{prooftree}

\vspace{0.5cm}
\end{proof}

The above property justifies the restriction to irredundant proofs from a syntactical point of view, although this justification is not really needed for completeness (Theorem \ref{semcom} below).

We now use the calculus $\HE^+$ to give a decision procedure for the logic $\E$; the key issue here is to restrict proof-search to irredundant derivations. 

We denote  by $|A|$ the size of a formula $A$ considered as a string of symbols.

\begin{theorem}\label{finite}
\label{bound}
	Every $\HE^+$ derivation of a formula $A$ of $\E$ is finite and it
	is either a proof or it contains a saturated hypersequent. 
\end{theorem}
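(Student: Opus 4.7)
The plan is to combine a subformula-style property with a termination-measure argument. First I would establish a bounded-subformula property: in any irredundant $\HE^+$ derivation rooted at $\seq A$, every formula occurring anywhere belongs to the finite set $\mathit{Fm}(A)$ obtained by closing $\mathrm{Sub}(A)$ under negation and by adjoining $\Bet \neg B$ for each subformula of the form $\bigcirc(C/B)$. This would be proved by inspection of the rules read bottom-up: propositional, $(\Box L+)$, $(\Box L2)$, and $(\Box R+)$ rules yield only (possibly negated) subformulas; $(\bigcirc L+)$, $(\bigcirc L2)$, and $(\bigcirc R+)$ introduce only formulas of the form $\Bet \neg B$ with $\bigcirc(C/B) \in \mathrm{Sub}(A)$, together with the subformulas $A,B$; and $(\Bet+)$ merely strips the outermost $\Bet$ from the formulas in $\Gamma^{b\downarrow}$, producing negated subformulas already in $\mathit{Fm}(A)$. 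Hence $|\mathit{Fm}(A)|$ is linear in $|A|$.

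Since internal and external contraction are height-preserving admissible (Lemma~\ref{Lem:Admiss}), I may identify each component with a pair of subsets of $\mathit{Fm}(A)$ and each hypersequent with a set of such components; up to this identification there are only finitely many hypersequents. I would then assign to each hypersequent $H$ a finite measure $c(H)$ counting, e.g., the number of pairs (formula, component) already realised in $H$, which is bounded by a constant depending only on $|A|$. The key observation is that every \emph{irredundant} backward rule application strictly increases $c(H)$: rules acting on an existing component (the propositional rules, $(\Box L+)$, $(\Box L2)$, $(\bigcirc L+)$, $(\bigcirc L2)$) add at least one genuinely new formula to the targeted component, for otherwise the associated saturation condition would already hold and the application would be redundant; and the rules creating a new component ($(\bigcirc R+)$, $(\Bet+)$, $(\Box R+)$) add a component not subsumed, up to contraction, by any already present, again by the corresponding saturation condition. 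Because $c$ is bounded, every branch has bounded length, which gives the first assertion.

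Finally, for the second assertion I would argue as follows: by the first part any derivation can be extended only finitely many times before reaching a maximal one. At each leaf of a maximal derivation, either the leaf is an axiom, in which case its branch is a proof-branch, or no irredundant rule can be applied; in the latter case the very negation of the redundancy conditions is, by construction, the collection of saturation conditions $(R)_S$, so the leaf is saturated. The hard part is the case-by-case bookkeeping: verifying (a) that no formula outside $\mathit{Fm}(A)$ is ever produced—especially by $(\Bet+)$, whose premise contains the set-theoretic $\Gamma^{b\downarrow}$—and (b) that irredundancy really forces $c$ to strictly increase for each of the many rule shapes, in particular for the ``2'' versions of the left rules, where the new formula must land in a specific, already present component. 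Once this verification is in place, the overall argument is routine.
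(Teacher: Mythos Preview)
Your overall strategy matches the paper's (bounded subformula set plus a growth measure), and the second assertion is handled correctly. There is, however, a gap in your measure $c(H)$. You argue it is bounded by identifying hypersequents with \emph{sets} of set-pair components via contraction-admissibility; but under that identification $c$ is not monotone: a formula-adding rule such as $(\Box L2)$ may turn a component $S$ into $S'=S+F$ that coincides with a component already in $H$, so the set-collapse drops $S$ without gaining anything and your count of realised $(\text{formula},\text{component})$ pairs strictly decreases. Adding ``a genuinely new formula to the targeted component'' does not force $c$ up when the resulting component was already present. (Contraction-admissibility concerns provability, not the shape of a possibly failing derivation, so it does not justify the identification for the purpose of a monotone measure.) If instead you keep multisets, strict increase is immediate, but then boundedness needs a separate argument you do not give.

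The repair is what the paper does: first bound the multiset number of components. Only $(\Box R+)$, $(\bigcirc R+)$, $(\Bet+)$ create components, and irredundancy ties each application to a ``signature'' (e.g.\ for $(\Box R+)$: some component has $C$ in its succedent) that the created component and all its descendants witness forever, so each signature is used at most once along a branch. Once the component count is bounded by some $K$, the total multiset formula count, bounded by $2K\cdot|\mathit{Fm}(A)|$, is the correct strictly increasing, bounded measure.

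The paper sharpens this with a key observation you miss: starting from a formula of $\E$, every antecedent contains at most \emph{one} $\Bet$-formula, because only $(\bigcirc R+)$ introduces $\Bet$ on the left, and it does so exactly once, into a fresh component. Hence the $(\Bet+)$ signatures are pairs of single formulas, giving $O(n^2)$ components and an $O(n^4)$ bound on hypersequent size and branch length. Your generic counting yields only finiteness (with, a priori, exponentially many $(\Bet+)$ signatures); the polynomial bound is exactly what the subsequent co-NP theorem relies on, so in the paper's context it cannot be skipped.
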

\begin{proof}
Let ${\mathcal D}$ be any derivation built from $\seq A$ by backwards application of the rules.  We first prove that all hypersequents in ${\mathcal D}$ are finite and provide an upper bound on their size. To this purpose let  $|A|= n$ and consider
	$SUB^+(A)   = \{B \mid B \ \mbox{is a subformula of $A$ }\}  \cup \{\Bet \neg C \mid \bigcirc (D / C)  \ \mbox{occurs in } A, \mbox{for some} \ C\}$.
Clearly the cardinality of 	$SUB^+(A)$ is $O(n)$ and so it is  the size of each formula in it.

Let $H :=  	\Gamma_1 \seq \Delta_1 \hh \dots \hh \Gamma_k \seq \Delta_k$ be any hypersequent occurring in  ${\mathcal D}$.
The size of each component is bounded by $O(n^2)$: it contains $O(n)$ formulas each one of size $O(n)$. To estimate the size of $H$, we estimate the number of its components (i.e. $k$).
Observe that the rules which "create" new components are $(\Box R+)$, $(\bigcirc R+)$, and $(\Bet+)$.
Consider first $(\Box R+)$: by the irredundancy restriction this rule is applied \emph{exactly once} to each formula, say $\Box C$, occurring in the consequent of a component and creates only \emph{one} new component, no matter if $\Box C$ appears in the consequent of many components. To illustrate the situation, consider, e.g.,
	$$\irule{\ldots\Gamma_i \Ri \Delta_i,  \Box C\hh \seq C  \hh \ldots \hh \Gamma_j \Ri \Delta_j, \Box C\hh \ldots \hh \Gamma_k \Ri   \Delta_k}
{\ldots\Gamma_i \Ri   \Delta_i, \Box C\hh \ldots \hh \Gamma_j \Ri   \Delta_j, \Box C \hh \ldots \hh \Gamma_k \Ri   \Delta_k}{}$$
the irredundancy restriction 
ensures that if $(\Box R+)$ is applied to  $\Gamma_i \Ri   \Delta_i$,  it cannot be applied to the component $\Gamma_j \Ri   \Delta_j, \Box C$. 
This means that the number of components created by $(\Box R+)$ is bounded by $\Box$-ed subformulas of $A$, whence  it is $O(n)$. 
The situation for $(\bigcirc R+)$ is similar.

For the rule $(\Bet+)$, first observe the following fact:\\
{\it Given any derivation  ${\mathcal D}$ having at its root a formula of $\E$ (that is a hypersequent $\seq A$) at most one $\Bet$ formula  can occur in the antecedent $\Gamma_i$ of any component of any hypersequent  in ${\mathcal D}$, that is  $\Gamma_i^{b\downarrow}$ contains at most one formula. }

By this fact the rule $(\Bet+)$  may be applied when  $\Gamma_i^{b\downarrow}$ contains a formula and when  $\Gamma_i^{b\downarrow}=\emptyset$, in both cases the applications are not duplicated,  
for instance in the former case, we may have: 
	$$\small \irule{\ldots\Gamma_i, \Bet \ \neg E \Ri   \Bet \ \neg F \hh \neg E \seq \neg F  \hh \ldots \hh \Gamma_j, \Bet \ \neg E \Ri   \Bet \ \neg F\hh \ldots }
{\ldots\Gamma_i, \Bet \ \neg E \Ri  \Bet \ \neg F \hh \ldots \hh \Gamma_j, \Bet \ \neg E \Ri  \Bet \ \neg F\hh \ldots }{}$$
Thus there is at most one application of the $(\Bet+)$ rule for any pair of $\Bet$ formulas (case  $\Gamma_i^{b\downarrow}\not=\emptyset$) plus possibly an application for any $\Bet$ formula (case $\Gamma_i^{b\downarrow}=\emptyset$).
Since $\Bet$ formulas come from the decomposition of $O$-subformulas and there are $O(n)$  of them, the number of components created by the $(\Bet+)$ rule is $O(n^2+n) = O(n^2)$.
We can conclude that the number of components of any hypersequent in ${\mathcal D}$ is  $O(n^2)$, whence  the size of each hypersequent is $O(n^4)$.

We get also an upper bound on proof branches:  since any backward application of a rule is irredundant,  it must add some formula/component. Therefore the length of each proof branch is also bounded by $O(n^4)$ and the derivation is finite. Finally each leaf must be an axiom or a saturated hypersequent otherwise a rule would have been applied to it.
\end{proof}

The next theorem  shows the completeness of  $\HE^+$.

\begin{theorem}\label{semcom}
Every valid formula $A$ of $\E$ has a proof in $\HE^+$.
\end{theorem}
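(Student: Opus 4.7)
My plan is to prove completeness by contraposition via countermodel extraction: if $\seq A$ is not provable in $\HE^+$, I construct a preference model in which $A$ is falsified at some world. By Theorem~\ref{bound}, any backward proof-search derivation starting from $\seq A$ is finite, and if it is not a proof it must contain a leaf that is a saturated hypersequent $H$. Since the rules of $\HE^+$ repeat the introduced formulas in their premises, the original formula $A$ occurs in the succedent of (at least) one component of $H$.

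From such a saturated $H = \Gamma_1 \seq \Delta_1 \hh \dots \hh \Gamma_k \seq \Delta_k$ I define $M = (W,\succ,V)$ by setting $W = \{w_1,\dots,w_k\}$, $V(w_i) = \{p \mid p \in \Gamma_i\}$, and $w_j \succ w_i$ iff $\Gamma_i^{b\downarrow} \subseteq \Gamma_j$ (i.e.\ every $C$ with $\Bet\,C \in \Gamma_i$ appears in $\Gamma_j$). Since $H$ is saturated, no atomic formula lies in both $\Gamma_i$ and $\Delta_i$ (otherwise $H$ would be an axiom).

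The heart of the argument is a \emph{truth lemma}: for every formula $B$ occurring in $H$, if $B \in \Gamma_i$ then $M,w_i \models B$, and if $B \in \Delta_i$ then $M,w_i \not\models B$. The proof is by induction on $|B|$, using the corresponding saturation condition at each step. The propositional cases follow immediately from $(\neg L)_S,(\neg R)_S,(\ri L)_S,(\ri R)_S$. For $\Box B$: if $\Box B \in \Gamma_i$, then $(\Box L2)_S$ forces $B \in \Gamma_j$ for every component $j$, hence $w_j \models B$ for all $j$ by IH, giving $w_i \models \Box B$; if $\Box B \in \Delta_i$, then $(\Box R+)_S$ provides a component with $B$ in its succedent, falsifying $\Box B$. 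For $\bigcirc(B/A)$: if $\bigcirc(B/A) \in \Gamma_i$, then for any candidate $w_j \in \mathrm{best}(A)$ we have $w_j \models A$ and $w_j \models \Bet\,\neg A$, so by $(\bigcirc L2)_S$ (and IH) the only remaining alternative is $B \in \Gamma_j$, so $w_j \models B$; if $\bigcirc(B/A) \in \Delta_i$, then $(\bigcirc R+)_S$ yields a component whose associated world witnesses the falsification. For $\Bet\,A$: if $\Bet\,A \in \Gamma_i$ and $w_j \succ w_i$, then by definition of $\succ$ we get $A \in \Gamma_j$, hence $w_j \models A$ by IH; if $\Bet\,A \in \Delta_i$, the condition $(\Bet+)_S$ gives a component $\Sigma \seq \Pi$ with $\Gamma_i^{b\downarrow} \subseteq \Sigma$ and $A \in \Pi$, whose associated world $w_j$ satisfies $w_j \succ w_i$ and $w_j \not\models A$ by IH, hence $w_i \not\models \Bet\,A$.

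Applying the truth lemma to the component in which $A$ appears in the succedent yields a world falsifying $A$, contradicting validity. The main obstacle is the $\Bet$ case, because the preference relation $\succ$ must be defined so that both directions of the truth lemma work simultaneously; the choice $w_j \succ w_i \Leftrightarrow \Gamma_i^{b\downarrow} \subseteq \Gamma_j$ is exactly calibrated so that $(\Bet+)_S$ supplies a $\succ$-successor on the right and the definition of $\succ$ constrains all $\succ$-successors on the left. A related subtlety is that the $\bigcirc$ case must quantify over \emph{all} of $W$ as candidate best-worlds, which is precisely why the saturation condition $(\bigcirc L2)_S$ (inherited from the two-component rule $(\bigcirc L2)$) is needed in addition to $(\bigcirc L+)_S$.
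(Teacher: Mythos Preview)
Your approach is the paper's: contraposition, countermodel extracted from a saturated leaf, truth lemma by induction on formula complexity. The one substantive difference is your definition of $\succ$: you set $w_j \succ w_i$ iff $\Gamma_i^{b\downarrow}\subseteq\Gamma_j$, whereas the paper additionally requires that some $\Bet\,C\in\Delta_i$ have $C\in\Delta_j$. Correlatively, the paper restricts its truth lemma to $\mathcal{L}$-formulas (no $\Bet$) and unfolds the $\Bet$-level reasoning inline within the two $\bigcirc$ cases, while you extend the truth lemma to $\Bet$-formulas and then invoke it as induction hypothesis when treating $\bigcirc$. Both routes are correct: your coarser $\succ$ is harmless because the only place one must exhibit a world in $\mathrm{best}(A)$ is the case $\bigcirc(B/A)\in\Delta_i$, and there the witnessing component given by $(\bigcirc R+)_S$ carries $\Bet\,\neg A$ on the left, so its $\Gamma^{b\downarrow}$ is nonempty and your argument goes through. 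Your packaging is arguably cleaner; the paper's tighter $\succ$ yields a sparser countermodel.
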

\begin{proof}
	We prove the contrapositive: if $A$ is not provable in $\HE^+$ then there is a model in which $A$ is not valid. 
	Suppose that $A$ is not provable in $\HE^+$, by the previous theorem any derivation of $\seq A$ as root contains at least one branch ending with a  saturated hypersequent. Fix a derivation and let \\
	$H =  	\Gamma_1 \seq \Delta_1 \hh \dots \hh \Gamma_n \seq \Delta_n$
	be the intended saturated hypersequent. We build a countermodel of $A$ based on $H$.
	First we enumerate the components of $H$, calling $H'$ the corresponding structure: 
		$$H' = 1: \Gamma_1 \seq \Delta_1 \hh 2: \Gamma_2 \seq \Delta_2  \dots \hh n: \Gamma_n \seq \Delta_n$$
	We then define a model $M = (W, \succ, V)$ by stipulating:
	\vspace{-0.13cm}
	\begin{quote}
		$W = \{1,\ldots, n\}, \quad$
		$V(i) = \{P \mid P\in \Gamma_i\}$ with   $i: \Gamma_i \seq \Delta_i\in H' $\\
		$j \succ i$ where $i: \Gamma_i \seq \Delta_i\in H' $, $j: \Gamma_j \seq \Delta_j\in H' $, we have $\Gamma_i ^{b\downarrow} \subseteq \Gamma_j$  and there is a formula $\Bet \ C\in \Delta_i$ such that $C\in\Delta_j$.
	\end{quote}
	\vspace{-0.13cm}
	Notice that in the definition of the preference relation it may be $i=j$.
	We now prove the fundamental claim (truth lemma); to this purpose we do not need to consider formulas with $\Bet$, thus for $B\in \caL$:
	\begin{itemize}
		\item[(a)] for any $i\in W$, if $B\in\Gamma_i$ then $M,i\models B$
		\item[(b)] for any $i\in W$, if $B\in\Delta_i$ then $M,i\not\models B$
	\end{itemize}
	Both claims (a) and (b) are proved by structural induction on $B$.
	\begin{itemize}
		\item Let $B$ be an atom $P$, then (a) holds by definition of $V(i)$. Concerning (b), let $P\in\Delta_i$, since $H$ is saturated,  $P\not\in\Gamma_i$, otherwise $H$ would be an axiom; thus $P\not\in V(i)$ whence $M,i\not\models P$.
		\item the propositional case use saturation conditions and induction hypothesis. 
		\item Let $B = \bigcirc (D/C)$. (a) suppose $ \bigcirc (D / C)\in \Gamma_i$. We have to show that for every $j\in W$ the following holds: (case 1)  $M,j\not\models C$,  or (case 2) there is $k\in W$ with  $k\succ j$ such that $M,k\models C$, or (case 3)  $M,j\models D$.
		By saturation conditions $(\bigcirc L+)_S$ or  $(\bigcirc L2)_S$ according to $i=j$ or $i\not=j$, we have that either  $C\in \Delta_j$ or $\Bet \neg C\in \Delta_j$ or $D\in \Gamma_j$, in the first case by i.h. we get $M,j\not\models C$ (case 1), in the third case, by i.h.  we get $M,j\models D$ (case 3). Thus we are left with the case $\Bet \neg C\in \Delta_j$. By saturation condition $(\Bet)_S$, there is $k: \Gamma_k\seq\Delta_k\in H'$ such that $\Gamma_j^{b\downarrow} \subseteq \Gamma_k$
		and $\neg C\in \Delta_k$. Observe that by construction it holds $k\succ j$. Moreover,  by saturation condition $(\neg R)_S$, $C\in \Gamma_k$, whence by inductive hypothesis $M,k\models C$.\\
		(b) Suppose  $ \bigcirc (D / C)\in \Delta_i$. We have to show that there is  $j\in W$ such that: $M,j\models C$; for all $k\in W$ with $k \succ j$  $M,k\not\models C$; and $M,j\not\models D$. 
		By  $(O R)_S$ there is $j:\Gamma_j\seq\Delta_j\in H'$  such that $C\in \Gamma_j$,  $\Bet \neg C\in \Gamma_j$, and  $D\in \Delta_j$; by i.h. we get $M,j\models C$ and $M,j\not\models D$. We have still to prove that for all $k\in W$ with $k\succ j$  $M,k\not\models C$. To this aim, suppose $k \succ  j$, by construction we have that there is   $j: \Gamma_k \seq \Delta_k\in H' $ such that 
		$\Gamma_j ^{b\downarrow} \subseteq \Gamma_k$  and for some   formula $\Bet \ E\in \Delta_j$ it holds $E\in\Delta_k$. Since $\Bet \neg C\in \Gamma_j$,   $\neg C\in \Gamma_j ^{b\downarrow} \subseteq \Gamma_k$, whence by  $(\neg L)_S$ $C\in \Delta_k$; by i.h. we conclude $M,k\not\models C$ and we are done. 
		\item $B = \Box C$. (a) suppose $ \Box C\in \Gamma_i$. We have to show that for every $j\in W$, $M,j\models C$. Let $j\in W$ this means that  $j: \Gamma_k \seq \Delta_k\in H' $ (it might be $j=i$), by saturation condition $(\Box L+)_S$ or $(\Box L2)_S$, according to $i=j$ or $i\not= j$ we have $C\in \Gamma_j$, whence by i.h.  $M,j\models C$.\\
		(b) Suppose $ \Box C\in \Delta_i$. By saturation condition $(\Box R+)_S$  there is $j: \Gamma_j\Ri\Delta_j\in H'$ such that $C\in \Delta_j$, thus by i.h. $M,j\not\models C$.
	\end{itemize}
Being $\seq A$ the root of the derivation, for some $i: \Gamma_i \seq \Delta_i\in H'$, we have $A\in\Delta_i$, and by claim~(b) $M,i\not\models A$, showing that $A$ is not valid in $M$.
\end{proof}

This allows us to obtain a complexity bound for validity in  $\E$. 

\begin{theorem}
 Validity of formula of $\E$ can be decided in Co-NP time.
\end{theorem}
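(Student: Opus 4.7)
The plan is to prove the dual statement that non-validity in $\E$ lies in \textbf{NP}. Given a formula $A$ with $|A| = n$, I will exhibit a polynomial-size certificate whose verification in polynomial time witnesses that $A$ is not valid.

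First I would combine Theorems~\ref{finite} and~\ref{semcom}. The proof of Theorem~\ref{finite} already establishes that every hypersequent arising in an $\HE^+$ derivation rooted at $\seq A$ has at most $O(n^2)$ components, each of size $O(n^2)$, so total size $O(n^4)$; moreover every formula in it belongs to the polynomially bounded set $SUB^+(A)$. The proof of Theorem~\ref{semcom}, in turn, shows that $A$ is non-valid iff there exists a saturated hypersequent $H$ with $A$ occurring on the right-hand side of some component, together with an explicit construction of a preference countermodel $M$ from $H$ whose carrier has size equal to the number of components of $H$, hence $O(n^2)$.

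The \textbf{NP} procedure is then: guess a hypersequent $H$ of size $O(n^4)$ whose formulas are drawn from $SUB^+(A)$, and verify in polynomial time (i) that $A$ occurs on the right of some component, and (ii) that every saturation clause from the definition of saturation holds. All these clauses reduce to local membership checks in the finite antecedents and succedents; the only non-trivial ones — the existential conditions $(\bigcirc R+)_S$, $(\Bet+)_S$ and $(\Box R+)_S$ — are dispatched by a polynomial scan through the components of $H$. Soundness follows because any positive guess yields, via the construction in the proof of Theorem~\ref{semcom}, a preference countermodel to $A$; completeness follows because, by Theorem~\ref{finite}, any non-valid $A$ forces some finite derivation to have a leaf that is a saturated hypersequent of size $O(n^4)$, which serves as the required witness.

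The main point to attend to — more a subtlety than a true obstacle — is confirming that the polynomial size bound of Theorem~\ref{finite} really applies to \emph{every} saturated hypersequent that we are prepared to accept as a witness, not only to those generated by a particular backward proof-search strategy. This is however immediate: the bounds in Theorem~\ref{finite} depend only on $SUB^+(A)$ and on the irredundancy condition, and both can be enforced directly on the guessed $H$ without loss of generality, since any saturated hypersequent containing formulas outside $SUB^+(A)$ or violating irredundancy can be pruned to a smaller saturated witness of $A$'s non-validity. Concluding, non-validity is in \textbf{NP}, whence validity in $\E$ is in \textbf{Co-NP}, and as a byproduct one obtains the polynomial-size countermodel property announced in the abstract.
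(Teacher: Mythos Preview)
Your proposal is correct and follows essentially the same approach as the paper: both argue that non-validity is in \textbf{NP} by guessing a saturated hypersequent of size $O(n^4)$ and checking saturation in polynomial time, relying on Theorem~\ref{finite} for the size bound and on the countermodel construction of Theorem~\ref{semcom} for soundness. The only cosmetic difference is that the paper phrases the guess as a nondeterministic backward proof-search from $\seq A$ (which automatically keeps $A$ in some succedent and enforces the bounds), whereas you guess an arbitrary bounded hypersequent and separately check that $A$ occurs on the right; your final paragraph about ``pruning'' is unnecessary, since you never need \emph{every} saturated witness to be small---Theorem~\ref{finite} already guarantees that \emph{some} polynomial-size saturated witness exists whenever $A$ is non-valid, and that is all the \textbf{NP} argument requires.
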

\begin{proof}
Given $A$,  to decide whether $A$ is valid, we consider a non-deterministic algorithm which takes as input  $\seq A$ and guesses a saturated hypersequent $H$: if it 
finds it, the algorithm answers "non-valid", otherwise, it  answers "valid". As shown in the proof of the Theorem~\ref{bound}, the size of the candidate saturated hypersequent $H$ is  polynomially bounded by the size of $A$ (= $O(|A|^4)$), moreover checking whether  $H$ is saturated can also be done in polynomial time in the size of $A$ (namely $O(|A|^8)$). More concretely, the algorithm can try to build  
$H$ by applying the rules backwards in an arbirary but fixed order, applying the first applicable (i.e. non-redundant) rule and then choosing non-deterministically  one of its premises if there are  more than one. The number of steps is polynomially bounded  by $O(|A|^4)$ and checking whether a rule is applicable  to a given  hypersequent  is linear in  the size  of the hypersequent. Thus the whole non-deterministic computation is polynomial  in the size of the input formula. 
\end{proof}

By the previous results $\E$ turns out to have 
the polysize model property.

\begin{corollary}
If a formula $A$ of $\E$ is satisfiable (that is $\neg A$ is not valid), then it has a model of polynomial size in the length of $A$.
\end{corollary}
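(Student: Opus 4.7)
The plan is to read this corollary as an immediate model-theoretic repackaging of the effective countermodel construction already carried out in the proofs of Theorems~\ref{finite} and~\ref{semcom}. I would not attempt a fresh semantic construction; instead I would chain the previous results to bound the size of the model extracted from a saturated branch.

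First I would reduce satisfiability of $A$ to non-provability of $\neg A$: since $A$ is satisfiable, $\neg A$ is not valid, and by the completeness theorem (Theorem~\ref{semcom}, in its contrapositive form), $\neg A$ has no proof in $\HE^+$. Next I would apply Theorem~\ref{finite} to the root $\seq \neg A$: any backward derivation is finite, and since it cannot be a proof, at least one of its branches ends in a saturated hypersequent $H$. I would then invoke the model construction from the proof of Theorem~\ref{semcom} to obtain a preference model $M_H=(W,\succ,V)$ in which $W$ is exactly the set of components of $H$, and such that, by the truth lemma, $\neg A \in \Delta_i$ for some $i$ implies $M_H,i\not\models \neg A$, i.e.\ $M_H,i\models A$.

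The crucial step is the size estimate, and here I would simply quote the counting already done inside the proof of Theorem~\ref{bound}. That argument shows that, setting $n=|\neg A|=|A|+1$, every hypersequent appearing in an irredundant $\HE^+$ derivation starting from $\seq \neg A$ has at most $O(n^2)$ components (the contributions coming from $(\Box R+)$, $(\bigcirc R+)$ and $(\Bet+)$ are bounded by the number of $\Box$-, $\bigcirc$- and $\Bet$-subformulas of $\neg A$, all polynomial in $n$). Hence $|W|=O(|A|^2)$, which gives the required polynomial bound on the size of the model.

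The only subtlety worth flagging explicitly is that the construction must be carried out on an irredundant derivation, so that the bound from Theorem~\ref{bound} actually applies; otherwise the number of components is not controlled. I would therefore state at the outset that we may restrict attention to irredundant derivations of $\seq \neg A$ (justified by the admissibility of internal weakening and external contraction, Lemma~\ref{Lem:Admiss}), and then the whole argument is essentially bookkeeping over results already established. No new combinatorial obstacle arises.
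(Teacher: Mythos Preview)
Your approach is exactly the one the paper has in mind: the corollary is stated without proof, introduced only by the sentence ``By the previous results $\E$ turns out to have the polysize model property,'' and the previous results are precisely Theorems~\ref{finite} and~\ref{semcom} together with the $O(n^2)$ bound on the number of components established in the proof of Theorem~\ref{bound}. Your chaining of these is correct.

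One small slip to fix: the step ``$\neg A$ is not valid, hence $\neg A$ has no proof in $\HE^+$'' is the contrapositive of \emph{soundness} (provable $\Rightarrow$ valid), not of completeness. Theorem~\ref{semcom} says valid $\Rightarrow$ provable; its contrapositive goes the wrong way for what you need. Cite the soundness theorem for $\HE^+$ at that point instead, and the argument is clean.
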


\hide{
\begin{example}
\marginpar{Xavier????}
	We build a countermodel of the formula 
	$$(\Box (P \ri Q) \land O(R \mid  Q)) \ri O(R \mid P)$$
Let start a derivation with root $\seq F$, where $F$ is the above formula. Applying  the rules backwards we generate (among others) the  hypersequent \\
	$\Gamma_1 \seq\Delta_1 \hh \Gamma_2 \seq\Delta_2 \hh \Gamma_3 \seq\Delta_3$
	where : 
	 	\begin{eqnarray*}
	 	\Gamma_1 \seq\Delta_1 & = & \Box (P \ri Q),  O(R \mid  Q), P \ri Q,\seq O(R \mid P),  P, Q\\
	 	\Gamma_2 \seq\Delta_2 & = & P \ri Q,  Q, P, \Bet \neg P\seq R, \Bet\neg Q\\
	 	\Gamma_3 \seq\Delta_3 & = &  P \ri Q,  \neg P, Q, R \seq \neg Q, P,  \Bet\neg Q
	 \end{eqnarray*}
	  This hypersequent is saturated. 
	  Following the construction of 
	  Theorem~\ref{semcom}, we enumerate the components (respectively) by 1,2,3 and get the model $M = (W, >, V)$ where $W = \{1,2,3\}$, the preference relation is $3 > 2$ and $3>3$, and $V(1)= \emptyset, V(2) = \{P,Q\}, V(3) = \{Q,R\}$. It is easy to see that $i\models P\ri Q$, for $i=1,2,3$, $O(R \mid  Q)$ holds in the model, but $O(R \mid P)$ does not as $2\models P$, for all $j>2$ (i.e. $j=3$), $j\not\models P$, and $2\not\models R$.
\end{example}
}		

We end the section  with an example of explanation, obtained by countermodel construction, of a well-known CTD paradox. 

\vspace{0.2cm}

\noindent {\bf "Gentle Murder"}~\cite{Forr84}. 
Consider the following norms and fact:  
(i) You ought not kill
(ii) If you kill, you ought to kill gently
(iii) Killing gently is killing
(iv) You kill. 
In many deontic logics, these sentences are inconsistent and in particular (ii)-(iv) allow to derive the obligation to kill, contradicting (i)--hence the "paradox". 
We formally show that this does not happen in the logic $\E$.
To this purpose let the above sentences be encoded by: $\bigcirc (\neg k / \top), \bigcirc (g / k), \Box(g\ri k), k$ with the obvious meaning of propositional atoms. 
We first verify that the above formulas are consistent, thus we begin a derivation with root hypersequent
$$ \bigcirc(\neg k / \top), \bigcirc (g / k), \Box(g\ri k), k \seq \bot$$
One of the saturated hypersequents we find by applying the rules backwards is
$$ \bigcirc(\neg k / \top), \bigcirc (g / k), \Box(g\ri k), k, g\ri k, g \seq \bot, \Bet \neg\top \hh g\ri k, \neg k \seq \neg \top, g$$
Following the construction of Theorem~\ref{semcom}, we enumerate the components (respectively) by 1,2 and get the model $M = (W, \succ, V)$ where $W = \{1,2\}$, the preference relation is $2 \succ 1$  and $V(1) = \{k,g\},V(2)= \emptyset$. It is easy to see that $i\models g\ri k$, for $i=1,2$, both $\bigcirc (\neg k / \top), \bigcirc (g / k)$ are valid in the model and $1\models k$. Notice in particular that 1 is the "best" world where "kill" holds and in that world also "killing gently" holds. 

We can also verify that the sentences (ii)-(iv) do not derive the obligation to kill. Notice that this claim in $\E$ is not entailed by what we have just proved. 
To this purpose we initialise the derivation by 
$ \bigcirc (g / k), \Box(g\ri k), k \seq \bigcirc (k / \top)$	 
and we get (among others) the following saturated hypersequent:
$$\bigcirc (g / k), \Box(g\ri k), k, g\ri k, g \seq \bigcirc (k / \top) \hh \top, \Bet\neg \top, g\ri k \seq k, g$$
We get the model $M = (W, \succ, V)$, where $W$ and $V$ are as before (1 and 2 are now constructed using the new hypersequent), but $\succ$ is empty meaning that all worlds are best. 
Now 2 is a  "best" world in an absolute sense (i.e., for $\top$) and $k$ does not hold there.  By the evaluation rule 
(cf. Def.~\ref{def:0}), $\bigcirc (k / \top)$ fails both in 1 and 2.
Hence
killing is not best overall, and you are not obliged to kill.

\hide{		
		\begin{example}
Another example is Chisholm's paradox~\cite{chisholm} that consists of the following four sentences:
1. You ought to go to the assistance of your neighbours
2. If you go to the assistance of your neighbours, you ought to  let them know that
you are coming
3. If you do not go to the assistance of your neighbours, you ought to not tell them
that you are coming
4. You are not going to the assistance of your neighbour
		\end{example}
}

\subsection*{Acknowledgements}
Work funded by the projects FWF M-3240-N and  WWTF MA16-028. 
We thank the anonymous reviewers for their valuable comments.



\begin{thebibliography}{10}

\bibitem{ddl:Al94}
C.~Alchourr\'{o}n.
\newblock Philosophical foundations of deontic logic and the logic of
  defeasible conditionals.
\newblock In J.-J. Meyer and R.~Wieringa, editors, {\em Deontic Logic in
  Computer Science}, pages 43--84. John Wiley \& Sons, Inc., New York, 1993.

\bibitem{A84}
L.~{\AA}qvist.
\newblock Deontic logic.
\newblock In D.~Gabbay and F.~Guenthner, editors, {\em Handbook of
  Philosophical Logic: Volume II}, pages 605--714. Springer, Dordrecht, 1984.

\bibitem{ddl:AB97}
N.~Asher and D.~Bonevac.
\newblock Common sense obligation.
\newblock In Nute \cite{ddl:N97}, pages 159--203.

\bibitem{Avr96}
A.~Avron.
\newblock The method of hypersequents in the proof theory of propositional
  non-classical logics.
\newblock In {\em Logic: from foundations to applications}, pages 1--32. OUP,
  New York, 1996.

\bibitem{ddl:BFP19}
C.~Benzm\"uller, A.~Farjami, and X.~Parent.
\newblock {\AA}qvist's dyadic deontic logic {E} in {HOL}.
\newblock {\em IfCoLog}, 6:715--732, 2019.

\bibitem{ddl:danielsson1968preference}
S.~Danielsson.
\newblock {\em Preference and Obligation}.
\newblock Filosofiska F{\"a}reningen, Uppsala, 1968.

\bibitem{Forr84}
J.~Forrester.
\newblock Gentle murder, or the adverbial samaritan.
\newblock {\em J. of Phil.}, 81:193--197, 1984.

\bibitem{Nicola2}
L.~Giordano, V.~Gliozzi, N.~Olivetti, and G.~L. Pozzato.
\newblock Analytic tableaux calculi for {KLM} logics of nonmonotonic reasoning.
\newblock {\em {ACM} Trans. Comput. Log.}, 10(3):18:1--18:47, 2009.

\bibitem{GLOP2016}
M.~Girlando, B.~Lellmann, N.~Olivetti, and G.~L. Pozzato.
\newblock Standard sequent calculi for {L}ewis' logics of counterfactuals.
\newblock In {\em Proc. {JELIA}}, pages 272--287, 2016.

\bibitem{ddl:lou19}
L.~Goble.
\newblock Axioms for {H}ansson's dyadic deontic logics.
\newblock {\em Filosofiska Notiser}, 6(1):13--61, 2019.

\bibitem{ddl:H69}
B.~Hansson.
\newblock An analysis of some deontic logics.
\newblock {\em No$\hat{\mbox{u}}$s}, 3(4):373--398, 1969.
\newblock Reprinted in \cite[pp. 121-147]{ddl:H71}.

\bibitem{ddl:horty14}
J.~Horty.
\newblock Deontic modals: Why abandon the classical semantics?
\newblock {\em Pacific Philosophical Quarterly}, 95(4):424--460, 2014.

\bibitem{Kurokawa13}
H.~Kurokawa.
\newblock Hypersequent calculi for modal logics extending {S4}.
\newblock In {\em New Frontiers in Artificial Intelligence}, volume 8417 of
  {\em LNCS}, pages 51--68. Springer, 2013.

\bibitem{DBLP:journals/igpl/KuznetsL16}
R.~Kuznets and B.~Lellmann.
\newblock Grafting hypersequents onto nested sequents.
\newblock {\em Log. J. {IGPL}}, 24(3):375--423, 2016.

\bibitem{ddl:L73}
D.~Lewis.
\newblock {\em Counterfactuals}.
\newblock Blackwell, Oxford, 1973.

\bibitem{ddl:LB83}
B.~Loewer and M.~Belzer.
\newblock Dyadic deontic detachment.
\newblock {\em Synthese}, 54:295--318, 1983.

\bibitem{ddl:M93}
D.~Makinson.
\newblock Five faces of minimality.
\newblock {\em Studia Logica}, 52(3):339--379, 1993.

\bibitem{Min68}
G.~Minc.
\newblock Some calculi of modal logic.
\newblock {\em Trudy Mat. Inst. Steklov}, 98:88--111, 1968.

\bibitem{ddl:N97}
D.~Nute, editor.
\newblock {\em Defeasible Deontic Logic}.
\newblock Kluwer, Dordrecht, 1997.

\bibitem{Parent15}
X.~Parent.
\newblock Completeness of {{\AA}}qvist's systems {E} and {F}.
\newblock {\em Rev. Symb. Log.}, 8(1):164--177, 2015.

\bibitem{ddl:P21}
X.~Parent.
\newblock Preference semantics for {H}ansson-type dyadic deontic logic: a
  survey of results.
\newblock In {\em Handbook of Deontic Logic and Normative Systems}, volume~2,
  pages 7--70. College Publications, London, 2021.

\bibitem{ddl:PS97}
H.~Prakken and M.~Sergot.
\newblock Dyadic deontic logic and contrary-to-duty obligations.
\newblock In Nute \cite{ddl:N97}, pages 223--262.

\bibitem{ddl:S88}
Y.~Shoham.
\newblock {\em Reasoning About Change}.
\newblock MIT Press, Cambridge, MA, USA, 1988.

\bibitem{ddl:S75}
W.~Spohn.
\newblock An analysis of {H}ansson's dyadic deontic logic.
\newblock {\em J. of Phil. Logic}, 4(2):237--252, 1975.

\bibitem{ddl:T81}
J.~Tomberlin.
\newblock Contrary-to-duty imperatives and conditional obligation.
\newblock {\em No$\hat{\mbox{u}}$s}, pages 357--375, 1981.

\bibitem{ben16}
J.~van Benthem, P.~Girard, and O.~Roy.
\newblock Everything else being equal: A modal logic for ceteris paribus
  preferences.
\newblock {\em J. of Phil. Logic}, 38(1):83--125, 2009.

\bibitem{ddl:TT97}
L.~van~der Torre and Y.-H. Tan.
\newblock The many faces of defeasibility in defeasible deontic logic.
\newblock In Nute \cite{ddl:N97}, pages 79--121.

\bibitem{ddl:vanFraassen1972}
B.~van Fraassen.
\newblock The logic of conditional obligation.
\newblock {\em J. of Phil. Logic}, 1(3/4):417--438, 1972.

\end{thebibliography}

\end{document}